 \newtheorem{case}{Case}
 \newtheorem{subcase}{Subcase}
\newtheorem{observation}{Observation}
\begin{document}

\markboth{Bhagat et al.}
{$$Optimal Gathering over Weber Meeting Nodes in Infinite Grid \/$$}

%
\catchline{}{}{}{}{}
%

\title{Optimal Gathering over Weber Meeting Nodes in Infinite Grid}



\author{Subhash Bhagat}
\address{National Institute of Science Education and Research\\ HBNI, Bhubaneswar, India\\
\email{subhash.bhagat.math@gmail.com}}

\author{Abhinav Chakraborty}
\address{Advanced Computing and Microelectronics Unit\\Indian Statistical Institute, Kolkata, India\\
abhinav.chakraborty06@gmail.com
}

\author{Bibhuti Das}
\address{Advanced Computing and Microelectronics Unit\\Indian Statistical Institute, Kolkata, India\\
dasbibhuti905@gmail.com
}

\author{Krishnendu Mukhopadhyaya}
\address{Advanced Computing and Microelectronics Unit\\Indian Statistical Institute, Kolkata, India\\
krishnendu.mukhopadhyaya@gmail.com
}

\maketitle
		
		\begin{abstract}
		The \textit{gathering over meeting nodes} problem requires the robots to gather at one of the pre-defined \textit{meeting nodes}. This paper investigates the problem with respect to the objective function that minimizes the total number of moves made by all the robots. In other words, the sum of the distances traveled by all the robots is minimized while accomplishing the \textit{gathering} task. The robots are deployed on the nodes of an anonymous two-dimensional infinite grid which has a subset of nodes marked as \textit{meeting nodes}. The robots do not agree on a global coordinate system and operate under an asynchronous scheduler. A deterministic distributed algorithm has been proposed to solve the problem for all those solvable configurations, and the initial configurations for which the problem is unsolvable have been characterized. The proposed \textit{gathering} algorithm is optimal with respect to the total number of moves performed by all the robots in order to finalize the \textit{gathering}.
			\end{abstract}
		
		\keywords{Gathering, Meeting Nodes, Grid, Asynchronous, Look-Compute-Move cycle, Weber meeting nodes}

\section{Introduction}
	One recent trend in robotics is using a group of small, inexpensive and mass-produced robots to perform complex tasks. The main focus of theoretical research in swarm robotics is to identify minimal sets of capabilities necessary to solve a particular problem. The {\it gathering} problem asks the mobile entities, which are initially situated at distinct locations, to gather at a common location and remain there within a finite amount of time. In this paper, the robots are deployed on the nodes of an anonymous grid graph. 
	
	\noindent Robots are assumed to be anonymous (no unique identifiers), autonomous (without central control), homogeneous (execute the same deterministic algorithm) and oblivious (no memory of past observations). They do not have any explicit means of communication, i.e., they cannot send any messages to other robots. They do not have any agreement on a global coordinate system or chirality. Each robot has its own local coordinate system with the robot's current position as the origin. They are equipped with sensor capabilities in order to observe the positions of the other robots. No local memory is available on the nodes of the grid graph. The robots have unlimited visibility, i.e., they can perceive the entire graph. 
	
	\noindent Robots operate in Look-Compute-Move (LCM) cycles. In the \textit{Look} phase, a robot takes a snapshot of the entire configuration in its own local coordinate system. In the \textit{Compute} phase, it decides either to stay idle or to move to one of its neighboring nodes. In the \textit{Move} phase, it makes an instantaneous move to its computed destination. Based on the timing and activation of the robots, three types of schedulers are common in the literature. In the \textit{fully synchronous} (FSYNC) setting, all the robots are activated simultaneously. The activation phase of all the robots can be divided into global rounds. In the \textit{semi-synchronous} (SSYNC) setting, a subset of robots are activated simultaneously, i.e., not all the robots are necessarily activated in each round. FSYNC can be viewed as a special case of SSYNC. In the \textit{asynchronous} (ASYNC) setting, there is no common notion of time. Moreover, the duration of the \textit{Look, Compute} and \textit{Move} phases is finite but unpredictable and is decided by the adversary for each robot. In this paper, we have considered the scheduler to be \textit{asynchronous}. The scheduler is also assumed to be \textit{fair}, i.e., each robot performs its LCM cycle within finite time and infinitely often.
	
	\noindent In the initial configuration, the robots are placed at the distinct nodes of the grid. The input graph also consists of some \textit{meeting nodes} which are located on the distinct nodes of the grid graph. The \textit{meeting nodes} are visible to the robots during the \textit{look} phase, and they occupy distinct nodes of the grid. A robot can move to one of its adjacent nodes along the grid lines. The movement of the robots is assumed to be instantaneous, i.e., they can be seen only on the nodes of the input grid graph. They are equipped with \textit{global strong multiplicity detection capability}, i.e., in the \textit{Look} phase, they can count the exact number of robots occupying each node. In the \textit{global-weak} version, a robot can detect whether a node is occupied by any robot multiplicity. Unlike in the global versions, the local versions refer to the ability of a robot to perceive information about multiplicities concerning the node in which it resides.   
	
\noindent In this paper, the \textit{optimal gathering over Weber meeting nodes} problem has been studied in an infinite grid model. This is a variant of the {\it gathering over meeting nodes in infinite grid} problem, studied by Bhagat et al. \cite{DBLP:conf/caldam/BhagatCDM20,DBLP:journals/corr/abs-2112-06506}. This paper proposes a deterministic distributed algorithm for the problem with $n\geq 7$ asynchronous robots. The objective constraint is to minimize the total number of moves required by the robots in order to accomplish the \textit{gathering}. In this paper, we have considered \textit{Weber meeting nodes} and observed that if the \textit{gathering node} is a \textit{Weber meeting node}, the algorithm is optimal with respect to the total number of moves made by the robots. Moreover, the \textit{Weber meeting node} is not unique in general, even if the robots are non-collinear.
	\subsection{Motivation}
	\textit{Gathering over meeting nodes in infinite grids} problem was studied by Bhagat et al. \cite{DBLP:conf/caldam/BhagatCDM20,DBLP:journals/corr/abs-2112-06506}, where the robots are assumed to be deployed on the nodes of an infinite grid. The main aim of this paper is to study the problem under the optimization constraint that the sum of the distances traveled by the robots is minimized while accomplishing the \textit{gathering} task. In order to complete the \textit{gathering} task, the robots select a unique \textit{meeting node} and move towards it in such a way that the sum of the lengths of the shortest paths from each robot to the selected \textit{meeting node} is minimized. Since the robots are oblivious, the main challenge of designing a deterministic distributed algorithm lies in keeping the selected \textit{meeting node} invariant while the robots move towards it. It is worth noting here that unlike in the continuous domain, where the robots are represented as points in $\mathbb{R}^2$, they are not allowed to perform infinitesimal movements with infinite precision if they are deployed on the nodes of a graph. This motivates us to consider the specified problem in a grid-based terrain where the robots are only allowed to move along the edges of the input grid graph.  
	
	\subsection{Earlier works}
	In the continuous domain, the robots are represented as points in \textit{Euclidean plane} \cite{BHAGAT201650,10.1007/3-540-45061-0_90,DBLP:series/lncs/FlocchiniPS19,DBLP:journals/tcs/FlocchiniPSW05,DBLP:journals/jpdc/PattanayakMRM19}. Unlike in the graph model, the robots are placed on the nodes of an anonymous graph in general. In an anonymous graph, neither the nodes nor the edges of the graph are labeled, and no local memory is available on the nodes of the graph. The \textit{gathering} problem in the discrete domain has been extensively studied in various topolgies like rings \cite{DBLP:journals/tcs/KlasingMP08,DBLP:journals/dc/DAngeloSN14,10.1007/978-3-642-13284-1_9,10.1007/978-3-642-22212-2_14,10.1007/978-3-642-32589-2_48,DBLP:journals/jda/DAngeloSN14,DBLP:journals/algorithmica/DAngeloSNNS15,DBLP:journals/dc/StefanoN17} finite and infinite grids \cite{DBLP:journals/tcs/DAngeloSKN16,DBLP:journals/iandc/StefanoN17}, bipartite graphs \cite{DBLP:journals/tcs/CiceroneSN21}, complete bipartite graphs \cite{DBLP:journals/tcs/CiceroneSN21}, trees \cite{DBLP:journals/tcs/DAngeloSKN16,DBLP:journals/dc/StefanoN17} and hypercubes \cite{10.1007/978-3-030-14094-6_7}.
	
	\noindent Klasing et al. \cite {DBLP:journals/tcs/KlasingMP08} studied the \textit{gathering} problem in an anonymous ring using \textit{global weak-multiplicity detection} capability. They proved that the \textit{gathering} is impossible without the assumption of \textit{multiplicity detection capability}. They proposed a deterministic distributed algorithm in the asynchronous model for \textit{gathering} an odd number of robots. The algorithm also solves the \textit{gathering} problem for an even number of robots when the initial configuration is asymmetric. D'Angelo et al \cite{DBLP:journals/dc/DAngeloSN14} studied the \textit{gathering} problem in an anonymous ring, where the robots have \textit{global weak-multiplicity detection} capability. They proposed a deterministic distributed algorithm that solves the \textit{gathering} task for any initial configuration which is non-periodic and does not contain any edge-edge line of symmetry. Izumi et. al \cite{10.1007/978-3-642-13284-1_9} studied the \textit{gathering} problem in an anonymous ring and proposed a deterministic algorithm using \textit{local weak-multiplicity detection} capability. D'Angelo et al. \cite{DBLP:journals/jda/DAngeloSN14} studied the \textit{gathering} problem on anonymous rings with $6$ robots in the initial configuration. A unified strategy for all the \textit{gatherable} configurations has been provided in this paper. D'Angelo et al. \cite{DBLP:journals/algorithmica/DAngeloSNNS15} studied the \textit{exploration}, \textit{graph searching} and \textit{gathering} problem in an anonymous ring where the initial configuration is aperiodic and asymmetric. 
	
	\noindent D'Angelo et al. \cite{DBLP:journals/tcs/DAngeloSKN16}, studied the {\it gathering} problem on trees and finite grids. They showed that a configuration remains ungatherable if the configuration is periodic and the dimension of the finite grid consists of at least one even side. A configuration remains ungatherable even if it admits reflection symmetry with the reflection axis passing through the edges of the finite grid. The problem was solved for all the other remaining configurations without assuming any \textit{multiplicity detection capability}. Di Stefano et al. \cite{DBLP:journals/dc/StefanoN17}, studied the optimal {\it gathering} of robots in arbitrary graphs. This paper also introduced the concept of \textit{Weber points} \cite{doi:10.1080/0025570X.1969.11975961,Tan2010} in graphs. A \textit{Weber point} of a graph is a node of a graph that minimizes the sum of the distances from it to each robot. They proposed deterministic algorithms for the \textit{gathering} task on tree and ring topologies that always achieve \textit{optimal gathering} unless the initial configuration is ungatherable. In \cite{DBLP:journals/iandc/StefanoN17}, the optimal \textit{gathering} problem in an infinite grid model was studied by Di Stefano et al. They proposed a deterministic distributed algorithm that minimizes the total distance traveled by all the robots. They proved that their assumed model represents the minimal setting to ensure \textit{optimal gathering}.
	
	\noindent Cicerone et al. \cite{DBLP:journals/tcs/CiceroneSN21} studied the \textit{gathering} problem in arbitrary graphs and proposed a necessary and sufficient result for the feasibility of \textit{gathering} tasks in arbitrary graphs. They have also considered dense and symmetric graphs, like complete and complete bipartite graphs. A deterministic algorithm was proposed that fully characterize the solvability of the \textit{gathering} task in the synchronous setting. Bose et al. \cite{10.1007/978-3-030-14094-6_7} investigated the \textit{optimal gathering} problem in hypercubes, where the optimal criterion is to minimize the total distance traveled by each robot.
	
	\noindent Fujinaga et al. \cite{10.1007/978-3-642-17653-1_1} introduced the concept of \textit{fixed points} or \textit{landmarks} in the \textit{Euclidean plane}. In the \textit{landmarks covering problem}, the robots must reach a configuration, where at each \textit{landmark} point, there is precisely one robot. A distributed algorithm was proposed that assumes common orientation among the robots and minimizes the total number of moves traveled by all the robots. Cicerone et al. \cite{Cicerone2019} studied the \textit{embedded pattern formation (EPF) problem} without assuming common chirality among the robots. The problem asks for a distributed algorithm that requires the robot to occupy all the \textit{fixed points} within a finite amount of time. A variant of the \textit{gathering} problem was studied by Cicerone et al. \cite{CiceroneSN18}, where the \textit{gathering} is accomplished at one of the \textit{meeting points}. These are a finite set of points visible to all the robots during the \textit{Look} phase. They also studied the same problem with respect to the two optimal criteria, one by minimizing the total number of moves traveled by all the robots and the other by minimizing the maximum distance traveled by a single robot. Bhagat et al. \cite{DBLP:conf/caldam/BhagatCDM20,DBLP:journals/corr/abs-2112-06506} studied the \textit{gathering over meeting nodes problem} in an infinite grid. It was shown that even if the robots are endowed with \textit{multiplicity detection capability}, some configurations remain ungatherable. For a given positive integer $k$, the \textit{$k$-circle formation} \cite{a14020062,das2022k} problem asks a set of robots to form disjoint circles having $k$ robots on the circles occupying distinct locations. The circles are centered at the set of fixed points. 
   \subsection{Our contributions}
   This paper proposes a deterministic distributed algorithm for \textit{optimal gathering over Weber meeting nodes} problem, where the initial configurations comprise at least seven robots. The robots are deployed on the nodes of an infinite grid. The optimization criterion considered in this paper is the minimization of the total number of moves made by the robots to finalize the \textit{gathering}. In this paper, a \textit{meeting node} that minimizes the sum of the distances from all the robots is defined as a \textit{Weber meeting node}. Di Stefano et al. \cite{DBLP:journals/dc/StefanoN17} proved that to ensure {\it gathering} by minimizing the total number of moves, the robots must gather at one of the \textit{Weber points}. In our restricted \textit{gathering} model, the robots must gather at one of the \textit{Weber meeting nodes} to ensure {\it gathering} with a minimum number of moves. In this paper, we have shown that there exist some configurations where \textit{gathering over Weber meeting nodes} cannot be ensured, even if the robots are endowed with the \textit{multiplicity detection capability}. This includes the following collection of configurations: 
	\begin{enumerate}
	    \item Configurations admitting a single line of symmetry without any robots or {\it Weber meeting nodes} on the reflection axis. 
	    \item Configurations admitting rotational symmetry without a robot or a {\it meeting node} on the center of rotation. 
	\end{enumerate} 
	In this paper, the assumption on the \textit{multiplicity detection} refers to the \textit{global strong multiplicity detection} capability. We have shown that, without such an assumption, there are configurations where \textit{gathering} cannot be accomplished as soon as a multiplicity is created. However, there are initial configurations where \textit{gathering} can be ensured over a \textit{meeting node}, but not on the set of \textit{Weber meeting nodes}. This includes the configuration admitting a single line of symmetry without any robots or {\it Weber meeting nodes} on the reflection axis, but at least one \textit{meeting node} exists on the reflection axis. In that case, the feasibility of \textit{gathering over meeting nodes} has been studied.
  \subsection{Outline}
     The following section describes the robot model, and the notations used in the paper. Section \ref{s3} provides the formal definition of the problem and the impossibility results for the solvability of the \textit{gathering} task. Section \ref{s4} proposes a deterministic distributed algorithm to solve the \textit{optimal gathering over Weber meeting nodes} problem. Section \ref{s5} describes the correctness of the proposed algorithm. Section \ref{s6} discusses the optimal gathering for the configurations where \textit{gathering over a meeting node} can be ensured, but cannot be ensured over a \textit{Weber meeting node}. Finally, in Section \ref{s7}, we conclude the paper with some discussion for future research.
\section{Optimal Gathering over Weber Meeting Nodes} \label{s2}
The \textit{optimal gathering over meeting nodes} problem has been considered in an infinite grid graph. The objective is to minimize the total distance traveled by all the robots. In order to ensure \textit{optimal gathering over a Weber meeting node}, the robots must finalize {\it gathering} over a {\it meeting node} that minimizes the total distance traveled by all the robots, i.e., each robot must gather at one of the {\it Weber meeting nodes}.
\subsection{Terminology} 
In this subsection, some terminologies and definitions have been proposed.
\begin{itemize}
    \item \textbf{System Configuration}: 
  Let $P=(\mathbb{Z}$, $E')$ denote the infinite path graph, where the vertex set corresponds to the set of integers $\mathbb {Z}$ and the edge set is defined by the ordered pair $E'=\lbrace (i$, $i+1)|$ $i\in \mathbb{Z}\rbrace$. The input grid graph is defined as the \textit{Cartesian Product} of the graph $P\times P$. Let $V$ and $E$ denote the set of nodes and edges of the input grid graph, respectively. $d(u,v)$ denote the lengths of the shortest paths between the nodes $u$ and $v$. Let $M=\lbrace m_1,m_2,\ldots,m_s\rbrace$ denote the finite set of {\it meeting nodes} located on the nodes of the grid graph. $R(t)=\lbrace r_1(t),r_2(t),\ldots,r_n(t)\rbrace$ denote the finite set of robots deploying on the nodes of the grid at any time $t$ respectively. $C(t)=(R(t)$, $M)$ represents the \textit{system configuration} at time $t$. 
	\item \textbf{Symmetry} \cite{DBLP:conf/caldam/BhagatCDM20,DBLP:journals/corr/abs-2112-06506}: An \textit{automorphism} of a graph $G=(V$, $E)$ is a bijective map $\phi:V\rightarrow V$ such that $u$ and $v$ are adjacent if and only if $\phi (u)$ and $\phi (v)$ are adjacent. $f_t:V\rightarrow\lbrace 0, 1  \rbrace$ at any time $t\geq 0$ is defined as follows:
		\[ f_t(v)=\begin{cases} 
		0 & \text{if} \;v \textrm {  is not a meeting node} \\
		1 & \text{if} \;v \textrm{ is a {\it meeting node}} \\
	
		\end{cases}
		\]
		\noindent Let $\lambda_t: V\rightarrow N$ be a function denoting the number of robots on each node $v \in V$ at any time $t\geq0$. An \textit{automorphism} of a configuration $(C(t)$, $f_t, \lambda_t)$ is an \textit{automorphism} $\phi$ of the input grid graph such that $f_t(v)=f_t(\phi(v))$ and $\lambda_t(v)=\lambda_t(\phi(v))$ for all $v\in V$. The grid graph is embedded in the \textit{Cartesian} plane. As a result, a grid can admit only three types of automorphisms: translation, reflection and rotation, and compositions of them. Since the number of robots and {\it meeting node} is finite, a translational automorphism is not possible. An axis of reflection defines a reflection automorphism, while the center of rotation and the angle of rotation determine a rotational automorphism. If the configuration admits reflectional symmetry, the axis of reflection can be horizontal, vertical, or diagonal. The axis of symmetry can pass through the nodes or edges of the graph. In the case of rotational symmetry, the angle of rotation can be $90^{\circ}$ or $180^{\circ}$. The center of rotation can be a node, a center of an edge, or the center of a unit square.
		\begin{figure}[]
		\centering
			\includegraphics[width=0.70\columnwidth]{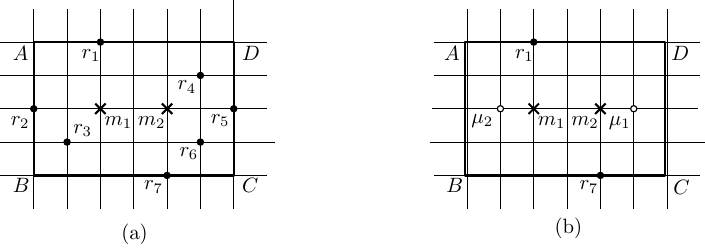}
			\caption{(a) In this figure, cross represents {\it meeting node} and black circles represent robot positions. $m_2$ is the unique \textit{Weber meeting node}. (b) Robots $r_2$ and $r_3$ move towards $m_2$ and create a multiplicity $\mu_2$. Similarly, $r_4$, $r_5$ and $r_6$ move towards $m_2$ and creates a multiplicity $\mu_1$. $m_2$ remains the unique \textit{Weber meeting node} but robots will not be able to compute it correctly if they do not have \textit{global strong multiplicity detection} capability.}
		\label{strong}
	\end{figure} 	
    \item \textbf{Weber meeting node}: Since in the initial configuration the robots are deployed at the distinct nodes of the grid graph, $\lambda_t(v)\leq 1$, $\forall v \in V$. In the final configuration, all the robots are on a single \textit{meeting node} $m \in M$. For a configuration to be final, there must exist a $m \in M$ such that $\lambda_t (m)= n$ and $\lambda_t (v)= 0 $ for each $v \in V \setminus \lbrace m \rbrace $. The \textit{consistency} of a node $m \in M$ at any time $t$ is defined as $c_t(m)= \sum\limits_{v\in V} d(v,m)$ $\lambda_t(v)$. A node $m \in M$ is defined as a {\it Weber meeting node} if it minimizes the value $c_t(m)$. In other words, a \textit{Weber meeting node} $m$ is defined as the \textit{meeting node} which minimizes the sum of the distances from all the robots to itself. The {\it Weber meeting node} may not be unique in general. Let $W(t)$ denote the set of all the {\it Weber meeting nodes} at some time $t$. A deterministic distributed algorithm that gathers all the robots in a \textit{Weber meeting node} via the shortest paths will be optimal with respect to the total number of moves made by the robots. 
	  
\noindent The robots are equipped with \textit{global strong multiplicity detection capability}, i.e., they can detect the exact number of robots occupying any node. Without this assumption, the \textit{Weber meeting nodes} cannot be detected correctly by the robots as soon as a multiplicity is created. As a result, the total number of moves made by the robots to accomplish the \textit{gathering} might not be optimized. For example, consider the configuration in Figure~\ref{strong}(a). If the robots compute the \textit{Weber meeting node} in this configuration, a unique \textit{Weber meeting node} would be computed. Due to the robot's movement, if the configuration in Figure~\ref{strong}(b) is reached, then without the \textit{global strong multiplicity detection capability}, robots will not be able to compute the unique \textit{Weber meeting node} correctly. This example shows that without assuming \textit{strong multiplicity detection capability}, the \textit{gathering over Weber meeting nodes} is no longer possible as soon as a multiplicity is created.
	\begin{figure}[]
			\centering
			{
				
				\includegraphics[width=0.250\columnwidth]{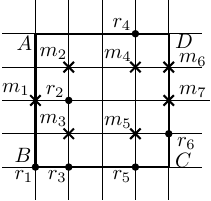}
			}
			
			\caption{ $ABCD$ denotes the minimum enclosing rectangle $M E R $. The lexicographic minimum string in this figure is $s_{DC}$= (1, 2, 6, 8, 16, 18, 22). Here $\alpha_{DC}$= ((0, 0), (1, 0), (1, 0), (0, 1), (0, 0), (0, 1), (1, 0), (0, 0), (1,0), (0, 1), (0, 0), (0, 0), (0, 0), (0, 0), (0, 0), (0, 0), (1, 0), (0, 1), (1, 0), (0, 1), (0, 0), (0, 0), (1, 0), (0, 0),(0, 1)) }
			\label{string}
		\end{figure}
	\item \textbf{Minimum Enclosing Rectangle}: Let $M E R$ = $ABCD$ denote the minimum enclosing rectangle of $R \cup M$. Let the dimension of $M E R $ be $p\times q$, where $|AB|=p$ and $|AD|=q$. The side length $|AB|$ of $MER$ is defined as the number of grid edges on it. Similarly, the other side lengths are defined. For a corner $A$, consider the two possible strings, which are denoted by $s_{AD}$ and $s_{AB}$. The string $s_{AD}$ is defined as follows: Starting from the corner $A$, scan the entire grid by proceeding along the respective direction and sequentially considering all the grid lines parallel to $AD$. While scanning the grid, let $(m_1,m_2,\ldots,m_s)$ denote the list of \textit{meeting nodes}, in the order of their appearance. Let $d_i$ denote the number of hops from the corner $A$ to $m_i$, if measured along the direction parallel to $s_{AD}$. The string $s_{AD}$ is defined by $s_{AD}=(d_1,d_2,\ldots,d_s)$ (Figure \ref{string}). The string $s_{AB}$ is defined similarly. Consider the two possible strings for each corner. Thus, there are a total of eight strings of distances of length $s$ that are obtained by traversing $M E R $. If $M E R $ is a non-square rectangle with $p>q$, the string $s_{AD}$ or $s_{AB}$ is associated for the corner $A$ in the direction of the smallest side ($AD$ in this case). If the \textit{meeting nodes} are asymmetric, there exists a unique string, which is lexicographically minimum among all the possible strings ($s_{DC}$ in Figure \ref{string}). Otherwise, if any two possible strings are equal, then the \textit{meeting nodes} are symmetric. The corner associated with the minimum lexicographic string is defined as a \textit{leading corner}, and the string associated with the \textit{leading corner} is defined as the \textit{string direction} for the respective corner. If $M E R $ is a square, consider the two strings associated with a corner. The string which is lexicographically smaller between the two strings is selected as the \textit{string direction} for the respective corner. The two strings for a respective corner are equal if the \textit{meeting nodes} are symmetric with respect to the diagonal passing through that corner of $M E R $. If all the robots and \textit{meeting nodes} lie on a single line, then $MER$ is a $p \times 1$ rectangle with $A=D$ and $B=C$, and length of $AD$ and $BC$ is 1. Note that, in this case, $s_{AD}$ and $s_{DA}$ refer to the same string. The \textit{meeting nodes} are symmetric when the strings $s_{AD}$ and $s_{DA}$ are equal. 

   		\item \textbf{Potential Weber meeting nodes}: In general, the \textit{Weber meeting nodes} in an infinite grid is not unique. If it is possible to gather at one of the \textit{Weber meeting nodes}, then all the robots must decide to agree on a common \textit{Weber meeting node} for \textit{gathering}. Depending on the symmetricity of the set $M$, the number of \textit{leading corners} is $1, 2$ or $4$ respectively. Consider the \textit{Weber meeting nodes} that represents the last \textit{Weber meeting nodes} in the \textit{string directions} associated to the \textit{leading corners}. Note that the number of \textit{Potential Weber meeting nodes} can be at most eight. Let $W_p(t)$ denote the set of such \textit{Weber meeting nodes} at time $t \geq 0$. The set $W_p(t)$ is defined as the set of \textit{Potential Weber meeting nodes}.
	\item \textbf{Key corner}: Consider all the \textit{leading corners} of $M E R $ and the strings $s_{i}$ associated with each \textit{leading corner} $i$. Assume that there exists at least two \textit{leading corners}. Without loss of generality, assume that $A$ and $D$ are the \textit{leading corners} and the strings parallel to $AD$ and $DA$ are the \textit{string directions} associated with the \textit{leading corners}. The string $\alpha_{AD}$ is defined as follows: Starting from the corner $A$, scan the grid along the \textit{string direction} of $A$, i.e., along $AD$ and associate the pair $(f_t(v), \lambda_t(v))$ to each node $v$ (Figure \ref{string}). The string $\alpha_{DA}$ is defined similarly. Consider the strings $\alpha_{AD}$ and $\alpha_{DA}$. If $C(t)$ is asymmetric, there always exists a unique string which is lexicographically smaller between $\alpha_{AD}$ and $\alpha_{DA}$. If $\alpha _{AD}$ is lexicographically smaller than $\alpha _{DA}$, then the corner $A$ is defined as the \textit{key corner}. If $C(t)$ is symmetric, there may exist more than one \textit{key corner}. Similarly, the strings $\beta_{i}$, for each \textit{non-leading corner} $i$ is defined.
\end{itemize}
In \cite{DBLP:journals/dc/StefanoN17}, it was proved that a \textit{Weber point} remains invariant under the movement of a robot towards itself. In our restricted \textit{gathering} model, where \textit{gathering} can be finalized only on \textit{meeting nodes}, we have the following lemma.
\begin{lemma}\label{wbn}
		Let $m$ be a \textit{Weber meeting node} in a given configuration $C(t)$. Suppose $C(t')$ denotes the configuration after a single robot or a robot multiplicity moves towards the \textit{Weber meeting node} $m$. Then the following results hold.
		\begin{enumerate}
			\item $m\in W(t')$
			\item $W(t')\subseteq W(t)$
			\end{enumerate}
		\label{m1}
	\end{lemma}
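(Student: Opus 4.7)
The plan is to exploit a simple but crucial property of the Manhattan metric on the infinite grid: moving a single step along a grid edge changes the distance to any other node by exactly $\pm 1$, never by $0$. I would state this as the starting observation: if a robot (or an entire multiplicity of size $k$) sitting at node $v$ moves along a shortest path from $v$ to $m$, landing at the adjacent node $v'$, then $d(v',m)=d(v,m)-1$, while for every other node $u\in V$ one has $d(v',u)-d(v,u)\in\{-1,+1\}$. The case analysis is routine (split according to whether the coordinate being changed by the step is on the same side or the opposite side of $u$), so I would dispatch it in one line.

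Next, I would translate this node-by-node bound into a bound on consistencies. Because only the $k$ robots at $v$ have moved, the consistency of any node $w$ changes by exactly $k\bigl(d(v',w)-d(v,w)\bigr)$. Consequently $c_{t'}(m)=c_t(m)-k$ and, for every other $u$, $c_{t'}(u)=c_t(u)+k\,\epsilon_u$ with $\epsilon_u\in\{-1,+1\}$. In particular $c_{t'}(u)\geq c_t(u)-k$.

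For part (1), I would chain these inequalities with the hypothesis $m\in W(t)$: for every $u$,
\[
c_{t'}(u)\;\geq\;c_t(u)-k\;\geq\;c_t(m)-k\;=\;c_{t'}(m),
\]
so $m$ still attains the minimum consistency at time $t'$, i.e.\ $m\in W(t')$. For part (2), I would take any $u\in W(t')$ and use the equality $c_{t'}(u)=c_{t'}(m)$ together with $c_{t'}(u)=c_t(u)+k\epsilon_u$ and $c_{t'}(m)=c_t(m)-k$, yielding $c_t(u)=c_t(m)-k(1+\epsilon_u)\in\{c_t(m),\,c_t(m)-2k\}$. The value $c_t(m)-2k$ is ruled out by $m\in W(t)$ (it would strictly beat the minimum), forcing $c_t(u)=c_t(m)$ and hence $u\in W(t)$.

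I do not anticipate a real obstacle: the $\pm 1$ behaviour of the Manhattan distance under a unit grid step is what makes the classical Weber-point monotonicity argument of Di Stefano and Navarra~\cite{DBLP:journals/dc/StefanoN17} go through verbatim in the discrete grid. The only mild care needed is to treat a multiplicity move and a single-robot move uniformly by carrying the factor $k$ through the computation; once the consistency-change identity is written down, both conclusions follow from a one-step inequality chain.
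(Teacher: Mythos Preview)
Your proof is correct and follows essentially the same route as the paper: both arguments track how the consistency $c_t(\cdot)$ of each meeting node changes when $k$ robots take a single grid step from $v$ to an adjacent $v'$ on a shortest path to $m$, and both conclude that $c_{t'}(m)=c_t(m)-k$ while no other consistency can drop by more than $k$. You make the underlying $\pm 1$ Manhattan-distance property explicit and use it to close part~(2) cleanly, whereas the paper leaves that step implicit in the line ``no node can become a Weber node if it was not before the move''; otherwise the arguments coincide.
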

	\begin{proof}
	\begin{enumerate}
	\item By definition we have, $c_t(m)= \sum\limits_{v\in V}d(v,m) \lambda_t(v)$ and $W(t)=\lbrace m | \min\limits_{m\in M} c_t(m)\rbrace$. Suppose $r(t)=a$ and $r(t')=b$, i.e., $r$ has moved from the vertex $a$ to $b$ along a shortest path towards $m$ in the time interval $[t,t']$. After the movement of the robot $r$, $\lambda_{t'}(a)$ and $\lambda_{t'}(b)$ becomes $\lambda_t(a)$ - 1 and $\lambda_t(b)$ + 1, respectively. Since $b$ lies on the shortest path from $r$ to $m$ and $d(a,b)=1$, $c_{t'}(m)$ became $c_t(m)$ + $d(b, m)$ - $d(a, m)$, which is again equivalent to $c_t(m)$ - 1. Therefore, $\min\limits_{m\in M} c_{t'}(m)=\min\limits_{m\in M} c_t(m)$ - 1 and hence $m\in W(t')$. Similarly, if a robot multiplicity moves from some vertex $a$ to $b$ at time $t'$ via any shortest path towards $m$, then after the movement of the robot multiplicity, $\lambda_{t'}(a)$ and $\lambda_{t'}(b)$ becomes $\lambda_t(a)$ - $j$ and $\lambda_t(b)+j$, respectively, where $j\geq2$ denotes the number of robots that move from node $a$ to $b$. This implies that, $c_{t'}(m)= c_t(m)$- $j$ and hence, $\min\limits_{m\in M} c_{t'}(m)=\min\limits_{m\in M} c_t(m)$- $j$. Therefore, the \textit{Weber meeting nodes} in the new configuration $C(t')$ are the \textit{Weber meeting nodes} of $C(t)$ which are on some shortest path from $r$ to $m$. Hence $m\in W(t')$.
		
		\item Assume that $m\in W(t')$. This implies that $m$ minimizes the value $c_t'(m)$. The first part of the proof implies that $\min\limits_{m\in M} c_{t'}(m)=\min\limits_{m\in M} c_t(m)$- $j$, where $j \geq 1$ denotes the number of robots that move from node $a$ to $b$. In other words, no node can become a \textit{Weber meeting node} if it was not before the move. Therefore, $m$ must belong to $W(t)$ and hence $W(t')\subseteq W(t)$.
	\end{enumerate}
	\end{proof}
		\begin{figure}[h]
		\centering
		{
			\includegraphics[width=0.250\columnwidth]{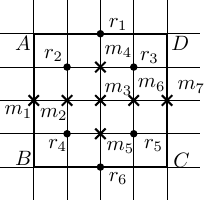}
		}
		
		\caption { Multiple \textit{Weber meeting nodes} in a configuration.}
		\label{weber}
	\end{figure}
	This lemma proves that the \textit{Weber meeting node} remains invariant under the movement of robots towards itself via a shortest path. In Figure \ref{weber}, the configuration admits rotational symmetry. The \textit{Weber meeting node} is not unique. There are three \textit{Weber meeting nodes} $m_3, m_4$ and $m_5$ in the configuration.
	\begin{observation}
		Let $C(0)$ be any initial configuration that admits rotational symmetry. Assume that the center of the rotational symmetry $c$ contains a \textit{meeting node} $m$. Then $m$ is a \textit{Weber meeting node}.
		\label{m3}  
	\end{observation}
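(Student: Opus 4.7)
The plan is to show that for every meeting node $m' \in M$, the consistency satisfies $c_0(m') \geq c_0(m)$, so that the center node $m$ minimizes the sum of distances to the robots and therefore belongs to $W(0)$. The key ingredient is the interplay between the rotational symmetry of the configuration and the fact that Manhattan distance behaves nicely under a $180^\circ$ rotation about a lattice point.

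First I would set up notation: let $\rho$ denote the rotational automorphism (of angle $90^\circ$ or $180^\circ$) about the center $c$ that is guaranteed by the hypothesis. Since $\rho$ is an automorphism of the configuration $C(0)$, it permutes the robot set $R$, so $R$ decomposes into $\rho$-orbits. Any robot $r \neq c$ has orbit of size $2$ (for $180^\circ$) or size $4$ (for $90^\circ$); if a robot happens to lie at $c$, its orbit is a singleton and contributes $d(c,m) = 0$ on the right-hand side of the inequality and $d(c,m') \geq 0$ on the left, so it causes no difficulty.

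The crucial geometric fact I would use is that for any lattice point $x$ and the $180^\circ$ rotation about the lattice point $c$, one has $d(x, \rho(x)) = 2\, d(x, c)$ because writing $x = (x_1, x_2)$ and $c = (c_1, c_2)$ gives $\rho(x) = (2c_1 - x_1, 2c_2 - x_2)$, so both coordinate differences double under the Manhattan metric. Combined with the triangle inequality $d(r, m') + d(\rho r, m') \geq d(r, \rho r)$, this yields, for the $180^\circ$ case,
\[
d(r, m') + d(\rho r, m') \;\geq\; 2\, d(r, c) \;=\; d(r, c) + d(\rho r, c),
\]
using that $d(r, c) = d(\rho r, c)$ by the isometry of $\rho$. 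For the $90^\circ$ case I would apply the same bound to the two diametrically opposite pairs $\{r, \rho^2 r\}$ and $\{\rho r, \rho^3 r\}$ inside each size-$4$ orbit, since $\rho^2$ is precisely a $180^\circ$ rotation about $c$.

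Summing the orbit-wise inequalities over all orbits gives $\sum_{r \in R} d(r, m') \geq \sum_{r \in R} d(r, c)$, and since $m$ is located at $c$ this is exactly $c_0(m') \geq c_0(m)$. Because $m \in M$ and $m'$ was an arbitrary meeting node, $m$ attains the minimum value of $c_0(\cdot)$ over $M$, and therefore $m \in W(0)$. The only step requiring real care is the midpoint identity $d(x,\rho x)=2d(x,c)$ for the $180^\circ$ rotation on the grid; once this Manhattan-geometry fact is in hand, the rest of the argument is a clean orbit-counting application of the triangle inequality.
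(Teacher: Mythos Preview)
Your argument is correct. The paper states this as an observation without supplying a proof, so there is nothing to compare against; your write-up fills that gap cleanly. The orbit decomposition together with the Manhattan midpoint identity $d(x,\rho x)=2d(x,c)$ for the $180^\circ$ rotation about a lattice point, and the reduction of the $90^\circ$ case to two $180^\circ$ pairs, are exactly the right ingredients, and the hypothesis that $c$ carries a meeting node guarantees $c$ is a lattice point so the identity applies.
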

\section {Problem Definition and Impossibility Results} \label{s3}
In this section, we have formally defined the problem. A partitioning of the initial configurations has also been provided in this section. 
\subsection{Problem Definition}
Let $C(t)=(R(t)$, $M)$ be a given configuration. The goal of the {\it optimal gathering over Weber meeting nodes problem} is to finalize the \textit{gathering} at one of the {\it Weber meeting nodes} of $C(0)$. We have proposed a deterministic distributed algorithm that ensures \textit{gathering over a Weber meeting node}, where the initial configuration consists of at least seven robots. If $\vert W(t)\vert =1$, then all the robots finalize the \textit{gathering} at the unique \textit{Weber meeting node}. Otherwise, all the robots must agree on a common \textit{Weber meeting node} and finalize the \textit{gathering}.
	\subsection{Partitioning of the Initial Configurations}
	All the initial configurations can be partitioned into the following disjoint classes.
		\begin{enumerate}
			\item  $\mathcal{I}_1-$: Any configuration for which $|W(t)|=1$ (Figure~\ref{initial1}(a)).
			\item  $\mathcal{I}_2-$: Any configuration for which $M$ is asymmetric and $|W(t)|\geq 2$ (Figure~\ref{initial1}(b)).
			\item $\mathcal{I}_3-$  Any configuration for which $M$ admits a unique line of symmetry $l$ and $|W(t)|\geq 2$. This can be further partitioned into:
			\begin{enumerate}
				\item $\mathcal{I}_{3}^a-$ $C(t)$ is asymmetric (Figure~\ref{initial2}(a)).
				\item $\mathcal{I}_{3}^b-$ $C(t)$ is symmetric with respect to the line $l$. This can be further partitioned into: $\mathcal{I}_{3}^{b1}-$ there exists at least one {\it Weber meeting node} on $l$. (Figure~\ref{initial2}(b)), $\mathcal{I}_{3}^{b2}-$ there exists at least one robot position on $l$ but no {\it Weber meeting nodes} on $l$ (Figure~\ref{initial2}(c)), $\mathcal{I}_{3}^{b3}-$ there does not exist any {\it Weber meeting node} or robot position on $l$, but there may exist a \textit{meeting node} on $l$ (Figure~\ref{initial5}(a))  and $\mathcal{I}_{3}^{b4}-$ there does not exist any {\it meeting node} or robot position on $l$.
			
					\begin{figure}[h]
			\centering
			{
				\includegraphics[width=0.250\columnwidth]{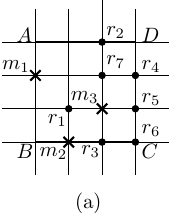}
			}
			\hspace*{1cm}
			{
				\includegraphics[width=0.250\columnwidth]{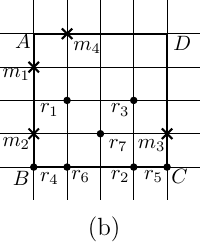}
			}
			\caption{(a) $\mathcal{I}_1$ configuration. $m_3$ is the unique \textit{Weber meeting node}. (b) $\mathcal{I}_2$ configuration. $m_2$ and $m_3$ are the \textit{Weber meeting nodes}. $A$ is the \textit{leading corner}. } 
			\label{initial1}
		\end{figure}
			\end{enumerate}
			\begin{figure}[h]
			\centering
			{
				\includegraphics[width=0.230\columnwidth]{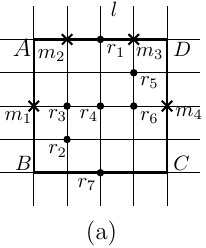}
			}
			\hspace*{0.21cm}
			{
				\includegraphics[width=0.230\columnwidth]{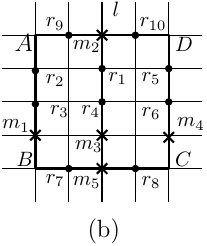}
			}
			\hspace*{0.21cm}
			{
				\includegraphics[width=0.230\columnwidth]{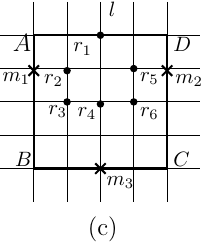}
			}
			\caption{(a) $\mathcal{I}_3^{a}$ configuration. $A$ is the \textit{key corner}. $m_1$ and $m_4$ are the \textit{Weber meeting nodes}. (b) $\mathcal{I}_3^{b1}$ configuration. $m_2$ and $m_3$ are the \textit{Weber meeting nodes}. (c) $\mathcal{I}_3^{b2}$ configuration. $m_3$ is a \textit{meeting node} on $l$, but not a {\it Weber meeting node} on $l$. $r_1$ and $r_4$ are the robot positions on $l$.}
			\label{initial2}
		\end{figure}
				\begin{figure}[h]
			\centering
			{
				\includegraphics[width=0.230\columnwidth]{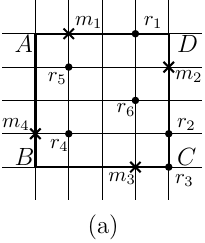}
			}
			\hspace*{0.25cm}
			{
				\includegraphics[width=0.230\columnwidth]{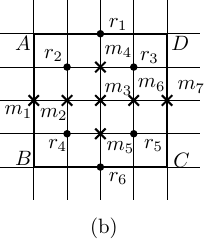}
			}
			\hspace*{0.25cm}
			{\includegraphics[width=0.230\columnwidth]{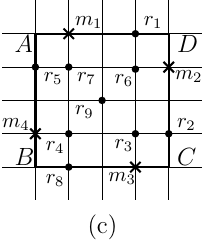}
				
			}
			\caption{(a) $\mathcal{I}_4^{a}$ configuration. $m_2$ and $m_3$ are the \textit{Weber meeting nodes}. (b) $\mathcal{I}_4^{b1}$ configuration. $m_3$, $m_4$ and $m_5$ are the \textit{Weber meeting nodes}. (c) $\mathcal{I}_4^{b2}$ configuration. $m_1$, $m_2$, $m_3$ and $m_4$ are the \textit{Weber meeting nodes}. Robot $r_9$ on the center of rotation.}
			\label{initial3}
		\end{figure}
			\begin{figure}[h]
			\centering
			{
				\includegraphics[width=0.230\columnwidth]{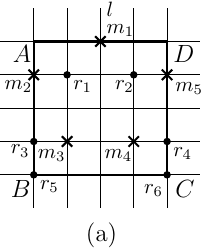}
			}
			\hspace*{0.41cm}
			{
				\includegraphics[width=0.230\columnwidth]{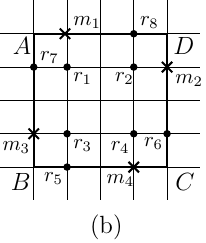}
			}
			
			\caption{(a) $\mathcal{I}_3^{b3}$ configuration. It contains a \textit{meeting node} $m_1$ on $l$ but it is not a \textit{Weber meeting node}. $m_3$ and $m_4$ are the \textit{Weber meeting nodes}. (b) $\mathcal{I}_4^{b3}$ configuration without robots or \textit{meeting nodes} on $c$.} 
			\label{initial5}
		\end{figure}
			\item $\mathcal{I}_4-$  Any configuration for which $M$ admits rotational symmetry with center of rotation $c$ and $|W(t)|\geq 2$. This can be further partitioned into:
			\begin{enumerate}
				\item $\mathcal{I}_{4}^a-$ $C(t)$ is asymmetric (Figure~\ref{initial3}(a)).
				\item $\mathcal{I}_{4}^b-$ $C(t)$ is symmetric with respect to rotational symmetry or $C(t)$ may admit a single line of symmetry. This can be further partitioned into: $\mathcal{I}_{4}^{b1}-$There exists a {\it meeting node} on $c$, $\mathcal{I}_{4}^{b2}-$ there exists a robot position on $c$ and $\mathcal{I}_{4}^{b3}-$ there does not exist any {\it meeting node} or robot positions on $c$ (Figure~\ref{initial5}(b)), or on any line of symmetry. 
		\end{enumerate}
		\end{enumerate}
We assume that if the \textit{meeting nodes} are symmetric with respect to a single line of symmetry, then $l$ is the line of symmetry. Similarly, if the \textit{meeting nodes} are symmetric with respect to rotational symmetry, then $c$ is the center of rotational symmetry. Since the partitioning of the initial configurations depends only on the position of \textit{meeting nodes}, which are fixed nodes, all the robots can determine the class of configuration in which it belongs without any conflict. Let $\mathcal I$ denote the set of all initial configurations.
	
		\begin{lemma}
		If the initial configuration $C(0) \in \mathcal{I}_{3}^{b3} \cup \mathcal{I}_{4}^{b3}$, then the \textit{gathering} over Weber meeting nodes problem cannot be solved.
		\label{m4}
	\end{lemma}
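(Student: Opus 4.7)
The plan is to argue by contradiction via the standard symmetry-preservation technique for anonymous oblivious robots. Suppose some deterministic algorithm $\mathcal{A}$ solves \textit{optimal gathering over meeting nodes} for a configuration $C(0) \in \mathcal{I}_{3}^{b3} \cup \mathcal{I}_{4}^{b3}$. Let $\phi$ be the nontrivial automorphism witnessing the symmetry of $C(0)$ (reflection across $l$ in the first case, rotation about $c$ in the second). I would exhibit an adversarial ASYNC schedule that keeps the configuration $\phi$-invariant forever, and then show that this prevents the algorithm from ever producing a valid \textit{gathering over Weber nodes}.

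First I would describe the adversary. Since no robot sits on $l$ (respectively on $c$), the map $\phi$ induces a fixed-point-free involution (respectively a fixed-point-free rotation) on the robot set; group the robots into the resulting $\phi$-orbits. The adversary activates all robots of a given orbit together and completes their Look, Compute and Move phases atomically before activating anything else. Because robots are anonymous, oblivious, lack chirality, and run the same deterministic algorithm, the snapshots taken by partner robots in an orbit are congruent under $\phi$ (each robot sees the configuration in its own local frame); hence their computed destinations are images of one another under $\phi$. After the synchronized move, the new configuration remains $\phi$-symmetric. This schedule is a legal asynchronous and fair schedule: every LCM cycle has finite duration chosen by the adversary, and every robot is activated infinitely often by rotating through the orbits.

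Next I would combine this with Lemma \ref{wbn} to derive the contradictions. In both cases the final configuration required by $\mathcal{A}$ must be an $n$-fold multiplicity at a \emph{Weber node} of $C(0)$. But a $\phi$-symmetric $n$-fold multiplicity must sit at a fixed point of $\phi$; the fixed points are exactly the nodes of $l$ in the reflective case and the single node $c$ in the rotational case. In $\mathcal{I}_{3}^{b3}$ no Weber node of $C(0)$ lies on $l$, and by Lemma \ref{wbn} the set of Weber nodes can only shrink along the execution, so the gathering point cannot simultaneously be on $l$ and be a Weber node; if instead the (possibly existing) meeting node on $l$ is chosen as a target, it is still not a Weber node, violating optimality. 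In $\mathcal{I}_{4}^{b3}$ the only candidate location for the multiplicity is $c$, which is not a meeting node at all, so gathering there is forbidden by the problem statement.

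The main obstacle is the asynchronous setting: a naive synchronous argument is straightforward, but one has to rule out adversarial schedules where one robot's Look happens before its partner's, which would allow the robots to see distinct (non-$\phi$-related) snapshots of some intermediate configuration. This is handled by the atomic-orbit activation above, which is available to the adversary precisely because ASYNC places no constraints on the relative order of activations and assigns the phase durations to the adversary; the scheduler still satisfies fairness, so the argument is consistent with all the model assumptions. The key subtle ingredient is that obliviousness prevents a robot from remembering any asymmetry observed in the past, so every Look that the adversary grants is of a symmetric configuration.
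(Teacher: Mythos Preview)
Your argument is correct and is the standard symmetry-preservation impossibility proof; it is in fact considerably more explicit than what the paper itself offers, since the paper does not prove Lemma~\ref{m4} directly but merely records it as a corollary of Theorem~1 in \cite{DBLP:conf/caldam/BhagatCDM20,DBLP:journals/corr/abs-2112-06506}. The orbit-synchronous adversarial schedule you describe is a legal fair ASYNC schedule, and the absence of chirality together with the adversary's control over local frames is exactly what makes the partners' views coincide, so that part is sound.

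One small wrinkle: your appeal to Lemma~\ref{wbn} (``the set of Weber nodes can only shrink along the execution'') overreads that lemma. The inclusion $W(t')\subseteq W(t)$ is established there only under the hypothesis that the move is along a shortest path toward some current Weber node; an arbitrary move need not preserve or shrink $W(t)$. Fortunately you do not need Lemma~\ref{wbn} at all: the problem definition requires termination at a Weber node \emph{of $C(0)$}, and in $\mathcal{I}_3^{b3}$ no Weber node of $C(0)$ lies on $l$, which already yields the contradiction once you know the $\phi$-symmetric final multiplicity must sit on $l$. With that adjustment your proof stands on its own, whereas the paper relies on the external reference.
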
	
	The proof of the above lemma can be observed as a corollary to Theorem 1, proved in Bhagat et al.   \cite{DBLP:conf/caldam/BhagatCDM20,DBLP:journals/corr/abs-2112-06506}. In \cite{DBLP:conf/caldam/BhagatCDM20,DBLP:journals/corr/abs-2112-06506}, it was proved that $\mathcal{I}_{3}^{b4}$ is ungatherable. Let $\mathcal U$ denote the set of all configurations for which \textit{gathering over a Weber meeting node} cannot be ensured. According to Lemma \ref{m4}, this includes all the configurations,
\begin{enumerate}
    \item admitting a single line of symmetry $l$, and $l \cap (R\cup W(t)) = \phi$.
    
    \item admitting rotational symmetry with center $c$ and $\lbrace c \rbrace \cap (R \cup M)= \phi$. 
\end{enumerate}

 \noindent Note that according to Observation \ref{m3}, if $c$ is a \textit{meeting node} on $c$, then it must be a \textit{Weber meeting node}.
 \section{Algorithm} \label{s4}
\subsection{Overview of the Algorithm}
In this subsection, a deterministic distributed algorithm has been proposed to solve the \textit{optimal gathering} problem by \textit{gathering} each robot at one of the \textit{Weber meeting nodes}. The proposed algorithm works for all the configurations $C(t) \in \mathcal I \setminus (\mathcal U \cup \mathcal I_3^{b4})$ consisting of at least seven robots. The main strategy of the algorithm is to select a \textit{Weber meeting node} among all the possible \textit{Potential Weber meeting nodes} and allow the robots to move towards the selected \textit{Weber meeting node}. The proposed algorithm mainly consists of the following phases: \textit{Guard Selection, Target Weber meeting node Selection, Leading Robot Selection, Symmetry Breaking, Creating Multiplicity on Target Weber meeting node} and \textit{Finalisation of Gathering}. In the \textit{Target Weber meeting node Selection} phase, the \textit{Potential Weber meeting node} for \textit{optimal gathering} is selected. The \textit{Weber meeting node} selected for \textit{gathering} is defined as the \textit{target Weber meeting node}. A set of robots denoted as \textit{guards} are selected in the \textit{Guard Selection} phase. \textit{Guards} are selected in order to ensure that the initial $M E R $ remains invariant. In the \textit{Leading Robot Selection} phase, a robot is selected as a \textit{leading robot} and placed. A unique robot is selected and allowed to move towards an adjacent node in the \textit{Symmetry Breaking phase}. This movement of the robot transforms a symmetric configuration into an asymmetric configuration. All the \textit{non-guard} robots move towards the \textit{target Weber meeting node}, thus creating a multiplicity on it in the \textit{Creating Multiplicity on Target Weber meeting node} phase. Finally, all the \textit{guards} move towards the uniquely identifiable (robots have \textit{global strong multiplicity detection capability}) \textit{target Weber meeting node} in the \textit{Finalisation of Gathering} phase and finalize the \textit{gathering}. 
\subsection{Half-planes and Quadrants}
	 Assume that the initial configuration $C(0)$ is asymmetric. First, consider the case when the locations of the \textit{meeting nodes} are symmetric with respect to a single line of symmetry $l$. The line $l$ divides $M E R $ into two half-planes. The half-planes defined in this section are open half-planes, i.e., excluding the nodes on $l$. If the \textit{meeting nodes} are symmetric with respect to rotational symmetry and $c$ is the center of rotation, then consider the lines $l$ and $l'$ which pass through $c$. These lines are perpendicular to each other and divide the $MER$ into four quadrants. The quadrants defined in this section are open quadrants, i.e., the quadrants exclude the nodes belonging to the lines $l$ and $l'$. A configuration is said to be \textit{balanced} if the following conditions hold:
	 \begin{enumerate}
	     \item $C(0) \in \mathcal{I}_{3}^a$ and the half-planes delimited by $l$ contain an equal number of robots.
	     \item $C(0) \in \mathcal{I}_{4}^a$. Assume that there exist at least two quadrants that contain the maximum number of \textit{Potential Weber meeting nodes}. Suppose more than one quadrant contains either the maximum or the minimum number of robots among all the specified quadrants. In that case, the configuration is said to be balanced.
	 \end{enumerate}
	 If the initial configuration is not balanced, then it is an \textit{unbalanced} configuration. 
	 \noindent An initial configuration $C(0)$ satisfies the following conditions:
	 \renewcommand\labelitemi{\tiny$\bullet$}
	
	 \begin{itemize}
	     \item $C_1$: there exists a unique half-plane or quadrant that contains the maximum number of \textit{Potential Weber meeting nodes}.
	     \item $C_2$: there exists multiple half-planes or quadrants that contain the maximum number of \textit{Potential Weber meeting nodes}. Any configuration $C(0)$ satisfying condition $C_2$ is said to satisfy $C_{21}$, if $C(0)$ is balanced. Otherwise, it satisfies $C_{22}$, if the initial configuration is unbalanced.
	     \item $C_3$: there does not exist any \textit{Potential Weber meeting node} on the half-planes or on the quadrants.
	 \end{itemize}
 \begin{figure}[h]
			\centering
			{
				\includegraphics[width=0.26\columnwidth]{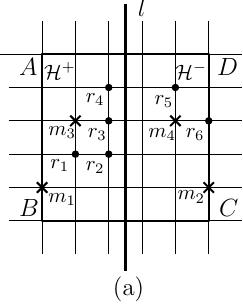}
				}
			\hspace*{0.81cm}
			{
				\includegraphics[width=0.26\columnwidth]{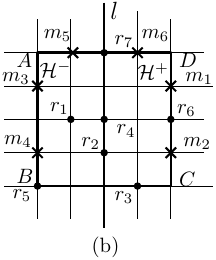}
			}
			\caption{(a) $l$ denotes the single line of symmetry. $m_3$ and $m_4$ are the \textit{Weber meeting nodes}. $\mathcal H^{+}$ is defined as the half-plane with the maximum number of robots. (b) $A$ and $D$ are the \textit{leading corners}. $A$ is the \textit{key corner}. $m_2$ and $m_4$ are the \textit{Weber meeting nodes}. $\mathcal H^{+}$ is defined as the half-plane not containing the \textit{key corner} $A$.}
			\label{half1}
		\end{figure}	
\begin{table}
\centering
\begin{tabular}{ |p{5cm}|p{7cm}|  }
\hline
\multicolumn{2}{|c|}{\textbf{Demarcation of the half-planes for fixing the target}} \\
\hline
\multicolumn{1}{|c|}{\textbf{Initial Configuration} $\boldsymbol{C(0)}$} & \multicolumn{1}{|c|}{$\boldsymbol{\mathcal H^{+}}$}\\ 
\hline
{\small satisfy $C_1$ }  & T{\small he unique half-plane containing the \textit{Potential Weber meeting nodes}  } \\
\hline
{\small satisfy $C_{21}$ $\land$ $l$ is a horizontal or vertical line of symmetry} & {\small The unique half-plane not containing the \textit{key corner} }\\ 
\hline
{\small satisfy $C_{21}$ $\land$ $l$ is a diagonal line of symmetry $\land$ $\exists$ a unique \textit{leading corner}  }  & {\small The half-plane which lies in the direction of $AD$, if $\alpha_{AD}$ is lexicographically larger than $\alpha_{AB}$ }\\
\hline
{\small satisfy $C_{21}$ $\land$ $l$ is a diagonal line of symmetry $\land$ $\exists$ two \textit{leading corners}}  & {\small The half-plane containing the corners $A$ and $D$, if $\alpha_{AD}$ is lexicographically larger than $\alpha_{CD}$} \\
\hline
{\small satisfy $C_{22}$} &  {\small The unique half-plane with the maximum number of robots} \\
\hline
\end{tabular}
\caption{\label{tab:half}Demarcation of the half-planes}
\end{table}
 \begin{figure}[h]
			\centering
			{
				\includegraphics[width=0.70\columnwidth]{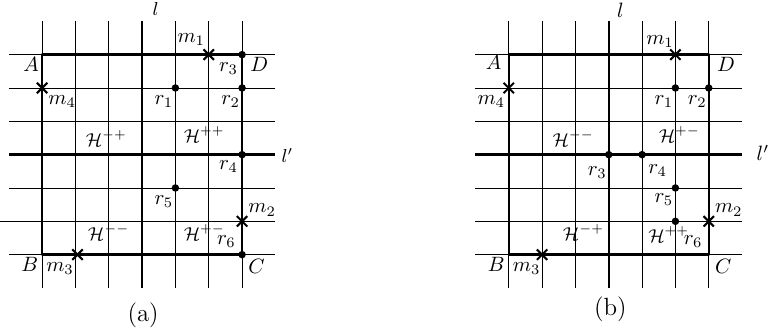}
				
			}
		
			\caption{(a) $m_1$ and $m_2$ are the \textit{Weber meeting nodes}. $\mathcal H^{++}$ denotes the unique quadrant with the maximum number of robots. (b) $m_1$ and $m_2$ are the \textit{Weber meeting nodes}. The quadrants containing the corners $C$ and $D$, contain the maximum number of robots. $D$ is the \textit{key corner}. $\mathcal H^{++}$ denotes the quadrant with the maximum number of robots and not containing the unique \textit{key corner}.}
			\label{half2}
		\end{figure}
\subsubsection{Demarcation of the Half-planes for fixing the target} Assume that the \textit{meeting nodes} are symmetric with respect to a single line of symmetry $l$. Note that $\mid W_p(t)\mid \leq 2$. Further, assume that $\mid W_p(t)\mid = 2$ and $C(0)$ does not satisfy $C_3$. This implies that there exists at least one \textit{Potential Weber meeting node} located on the half-planes. Note that, if $l$ is a diagonal line of symmetry, then there may exist one or two \textit{leading corners}. If there exists a unique \textit{leading corner}, then without loss of generality, let $A$ be the \textit{leading corner}. Otherwise, if there exist two \textit{leading corners}, then assume that $A$ and $C$ are the \textit{leading corners} and the \textit{string directions} associated to the corners $A$ and $C$ are along the sides $AD$ and $CD$, respectively. $\mathcal H^{+}$ is defined according to Table \ref{tab:half}. The other half-plane delimited by $l$ is defined as $\mathcal H^{-}$ (Figure \ref{half1} (a) and \ref{half1} (b)).
\subsubsection{Demarcation of Quadrants for fixing the target} First, consider the case when the \textit{meeting nodes} are symmetric with respect to rotational symmetry without multiple lines of symmetry and $W_p(t)\geq 2$. The quadrant $\mathcal H^{++}$ is defined according to Table \ref{tab:quad}. The other quadrants are defined as follows.
		\renewcommand\labelitemi{\tiny$\bullet$}
		\begin{itemize}
		    \item $\mathcal H^{-+}$:- The quadrant adjacent to $\mathcal H^{++}$ with respect to the line $l$.
		    \item $\mathcal H^{+-}$:- The quadrant adjacent to $\mathcal H^{++}$ with respect to the line $l'$.
		    \item $\mathcal H^{--}$:- The quadrant which is non-adjacent to $\mathcal H^{++}$ (Figure~\ref{half2}(a) and \ref{half2}(b)). 
		\end{itemize}
If $M E R $ is a square, and the configuration admits multiple lines of symmetry, there can be at most four lines of symmetry. If there are more than two lines of symmetry, the two lines that are perpendicular to each other and do not pass through any corner of $M E R $ are selected and considered as $l$ and $l'$. Consider the quadrants delimited by the lines $l$ and $l'$. The quadrants are defined similarly, as in the case when $M E R $ admits rotational symmetry without multiple lines of symmetry.
\begin{table}
\centering
\begin{tabular}{ |p{6cm}|p{7cm}|  }
 \hline
 \multicolumn{2}{|c|}{\textbf{Demarcation of the quadrants for fixing the target}} \\
 \hline
 \multicolumn{1}{|c|}{\textbf{Initial Configuration} $\boldsymbol{C(0)}$} & \multicolumn{1}{|c|}{$\boldsymbol{\mathcal H^{++}}$}\\ 
 \hline
{\small satisfy $C_1$} & {\small The unique quadrant containing the maximum number of \textit{Potential Weber meeting nodes} }  \\
\hline
 {\small satisfy $C_{21}$ $\land$ the angle of rotation is $180^{\circ}$ $\land$ $\exists$ at least one quadrant that contains the \textit{Potential Weber meeting nodes} as well as the \textit{leading corners}} & {\small The unique quadrant containing the \textit{leading corner} with which the largest lexicographic string $\alpha_i$ is associated, and that contains the maximum number of robots }\\ 
 \hline
  {\small satisfy $C_{21}$ $\land$ the angle of rotation is $180^{\circ}$ $\land$ the quadrants that contain the \textit{Potential Weber meeting nodes}, do not contain the \textit{leading corners}} & {\small The unique quadrant containing the \textit{non-leading corner} with which the largest lexicographic string $\beta_i$ is associated, and that contains the maximum number of robots} \\
  \hline
{\small  satisfy $C_{21}$ $\land$ the angle of rotation is $90^{\circ}$} & {\small The quadrant containing the corner with which the largest lexicographic string $\alpha_i$ is associated, and that contains the maximum number of robots } \\
  \hline
 {\small satisfy $C_{22}$ } &   {\small The unique quadrant with the maximum number of robots }\\
  \hline
 {\small satisfy $C_3$ $\land$ unbalanced }& {\small The unique quadrant containing the minimum number of robots }\\
  \hline
 {\small satisfy $C_3$ $\land$ balanced} & {\small The unique quadrant that contains the smallest lexicographic string $\alpha_i$ associated with the \textit{leading corner} and containing the minimum number of robots} \\
 
 \hline
\end{tabular}
\caption{\label{tab:quad}Demarcation of the quadrants.}
\end{table}
		\begin{figure}[h]
			\centering
			{
				\includegraphics[width=0.250\columnwidth]{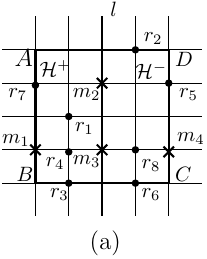}
			}
			\hspace*{0.75cm}
			{
				\includegraphics[width=0.250\columnwidth]{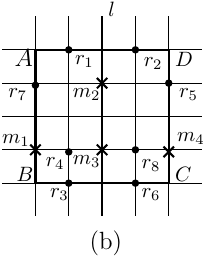}
			}
			
			\caption{(a) \textit{Meeting nodes} are symmetric with respect to a single line of symmetry $l$, but the configuration is asymmetric. $B$ is the \textit{leading corner} contained in $\mathcal H^{+}$. Robots $r_2$ and $r_3$ are selected as \textit{guards}. (b) The configuration is symmetric with respect to a single line of symmetry $l$. $B$ and $C$ are the \textit{leading corners}. Robots $r_1$, $r_2$, $r_3$ and $r_6$ are selected as \textit{guards}. }
				\label{guard}
			\end{figure}
\subsection{Phases of the Algorithm}
The proposed algorithm mainly consists of the following phases.
\subsubsection{Guard Selection}
In this phase, a set of robots is selected as \textit{guards} in order to keep the initial $M E R $ invariant. If there does not exist any \textit{meeting nodes} on a side of the boundary of $M E R $, then there must exist at least one robot on that particular side of the boundary. \textit{Guards} are selected in such a way that they remain uniquely identifiable. If a side of the boundary of $M E R $ contains at least one \textit{meeting node}, then a \textit{guard} robot is not required for that particular side of the boundary. Therefore, consider the case when the boundary of $M E R $ does not contain any \textit{meeting nodes}. Consider the robots which are on the boundary of the $M E R $. First, assume that $C(t)$ is asymmetric. Let $G$ denote the set of \textit{guards}. Let $G_C$ denote the set of \textit{guard corner} and is defined as follows.
	\renewcommand\labelitemi{\tiny$\bullet$}
	\begin{itemize}
	    \item The unique \textit{leading corner}, if the \textit{meeting nodes} are asymmetric.
	    \item The \textit{leading corner} contained in $\mathcal H^{+}$, if the \textit{meeting nodes} are symmetric with respect to a horizontal or vertical single line of symmetry $l$. The unique \textit{key corner} contained in $\mathcal H^{+}$, if the \textit{meeting nodes} are symmetric with respect to a diagonal line of symmetry.
	    \item The \textit{leading corner} contained in $\mathcal H^{++}$, if the \textit{meeting nodes} are symmetric with respect to rotational symmetry. 
	\end{itemize}
	The robot positions on the sides adjacent to the unique \textit{guard corner} and are closest to the \textit{guard corner} are considered as \textit{guards}. Similarly, the robots that are farthest from the \textit{guard corner} measured along the \textit{string direction} and lying on the sides non-adjacent to the \textit{guard corner} are also considered as \textit{guards}. Note that, in each case, there are exactly four \textit{guard} robot positions that are selected in this phase (Figure \ref{guard}(a)).

\noindent If $C(t)$ is symmetric with respect to a unique line of symmetry $l$ and $l$ is a horizontal or vertical line of symmetry, there are exactly two \textit{leading corners}. Consider the robot positions on the sides adjacent to the \textit{leading corners} and which are closest to the \textit{leading corners}. These two robots and their symmetric images are selected as \textit{guards}. The robots which are farthest from the \textit{leading corners} and lying on the side which are non-adjacent to the \textit{leading corners} are also selected as \textit{guards}. Hence, there are exactly six \textit{guard} robots that are selected when $C(t)$ is symmetric with respect to $l$ (Figure \ref{guard}(b)). Otherwise, if $l$ is a diagonal line of symmetry and there exists a unique \textit{leading corner}, then the robots positions on the sides adjacent to the \textit{leading corner} and are closest to the \textit{leading corner} are selected as \textit{guards}. The robot positions on the sides non-adjacent to the \textit{leading corner} and farthest from the \textit{leading corner} are also selected as \textit{guards}. Note that they are symmetric images of each other. If there are two \textit{leading corners}, the robots which are closest to the \textit{leading corners} and lying on the sides adjacent to the \textit{leading corners} are selected as \textit{guards}. Note that, if $C(t)$ is symmetric with respect to rotational symmetry, then since the center of rotational symmetry is also the center of fixed \textit{meeting nodes} and \textit{gathering} is finalized in the center, the \textit{Guard Selection} phase is not executed in this case.
\begin{table}
\centering
\begin{tabular}{ |p{6.5cm}|p{7cm}|  }
 \hline
 \multicolumn{2}{|c|}{\textbf{Target Weber meeting node Selection}} \\
 \hline
 \multicolumn{1}{|c|}{\textbf{Configuration} $\boldsymbol{C(t)}$} & \multicolumn{1}{|c|}{\textbf{Target Weber meeting node}}\\ 
 \hline
{\small  Admitting a unique \textit{Weber meeting node} } & {\small The unique \textit{Weber meeting node}} \\
\hline
  {\small Admitting a unique \textit{Potential Weber meeting node}} & {\small The unique \textit{Potential Weber meeting node}} \\ 
 \hline
  {\small $\mathcal I_3$ $\land$ there exists a \textit{Weber meeting node} on $l$} & {\small The northernmost \textit{Weber meeting node} on $l$ }\\
  \hline
 {\small $\mathcal I_3^{a}$ $\land$ there does not exist any \textit{Weber meeting node} on $l$ $\land$ $|W_p(t) = 2|$ $\land$ $l$ is a horizontal or vertical line of symmetry} & {\small The \textit{Potential Weber meeting node} in $\mathcal H^{+}$. Ties are broken by considering the \textit{Potential Weber meeting node} which appears last in the \textit{string direction} associated to the \textit{leading corner} in $\mathcal H^+$} \\
  \hline
 {\small $\mathcal I_3^a$ $\land$ there does not exist any \textit{Weber meeting node} on $l$ $\land$ $|W_p(t) = 2|$ $\land$ $l$ is a diagonal line of symmetry $\land$ there exists a unique \textit{leading corner}} & {\small The \textit{Potential Weber meeting node} in $\mathcal H^+$ which appears last in the \textit{string direction} associated to the unique \textit{leading corner} }\\
  \hline
  {\small $\mathcal I_3^a$ $\land$ there does not exist any \textit{Weber meeting node} on $l$ $\land$ $|W_p(t) = 2|$ $\land$ $l$ is a diagonal line of symmetry $\land$ there exists two \textit{leading corner}} & {\small The \textit{Potential Weber meeting node} in $\mathcal H^{++}$ that appears last in the \textit{string direction} associated to the \textit{key corner} in $\mathcal H^{+}$}\\
  \hline
{\small $\mathcal I_4$ $\land$ there exists a \textit{Weber meeting node} }on $c$ & {\small The \textit{Weber meeting node} on $c$} \\ 
 
  \hline 
      {\small $\mathcal I_4^a$ $\land$ there does not exist a \textit{Weber meeting node} on $c$ $\land$ $|W_p(t)| \geq 2$ $\land$ there does not exist any \textit{Weber meeting node} on the quadrants} & {\small The \textit{Potential Weber meeting node} which is closest from the unique \textit{key corner} in the \textit{string direction} and lying on either $l$ or $l'$}  \\
  \hline
  {\small $\mathcal I_4^a$ $\land$ there does not exist a \textit{Weber meeting node} on $c$ $\land$ $|W_p(t)| \geq 2$ $\land$ there exists a \textit{Weber meeting node} on the quadrants} & {\small The \textit{Potential Weber meeting node} in $\mathcal H^{++}$ which is farthest from the \textit{leading corner} contained in $\mathcal H^{++}$ in the \textit{string direction}}.   \\
 \hline
\end{tabular}
\caption{\label{tab:target}Target Weber meeting node selection.}
\end{table}
\begin{algorithm2e}[!h]
		\KwIn{$C(t)=(R(t)$, $M)$}
		\footnotesize
		\uIf{$C(t)\in I_1$}
		{Select the unique \textit{Weber meeting node} $m$\;} 
		
		\uElseIf{$C(t)\in I_2$}
		{ Select the unique \textit{Potential Weber meeting node} \;}
		\uElseIf{$C(t)\in I_3$}
		{
			\uIf{$C(t)$ is asymmetric and $l\cap W(t)\neq \phi$}
			{Select the northernmost \textit{Weber meeting node} on $l$}
			\uElseIf{$C(t)$ is asymmetric and $(l \cap W(t) = \phi$ $\land$ $|W_p(t)|=1$)}
			{Select the unique \textit{Potential Weber meeting node} \;}
			\uElseIf{$C(t)$ is asymmetric and ($l \cap W(t) = \phi \land |W_p(t)|=2$)}
			{
			\uIf{there exists a unique \textit{leading corner} in $\mathcal H^{+}$}
			{ Select the \textit{Potential Weber meeting node} in $\mathcal H^+$ which appears last in the \textit{string direction} associated to the unique \textit{leading corner}\;}
			\ElseIf{there exist two \textit{leading corner} in $\mathcal H^{+}$}
			{Select the \textit{Potential Weber meeting node} that appears last in the \textit{string direction} associated to the \textit{key corner} in $\mathcal H^{+}$ \;}
			}

			\ElseIf{$C(t)$ is symmetric with respect to the line $l$}
			{Select the northernmost \textit{Weber meeting node} on $l$ \;}
			
		}
		\ElseIf{$C(t)\in I_4$}
		{
			\uIf{$C(t)$ is asymmetric and $\lbrace c \rbrace\cap W(t) \neq\phi$}
			{Select the \textit{Weber meeting node} on $c$\;}
			\uElseIf{$C(t)$ is asymmetric and ($\lbrace c \rbrace\cap W(t)=\phi \land \mid W_p(t)\mid=1$)}
			{Select the unique \textit{Potential Weber meeting node} \;}
			\uElseIf{$C(t)$ is asymmetric and ($\lbrace c \rbrace\cap W(t)=\phi \land \mid W_p(t)\mid \geq 2$)}
			{\uIf{All the \textit{Potential Weber meeting nodes} lie either on line $l$ or $l'$}
				{Select the \textit{Potential Weber meeting node} which is farthest from the unique \textit{key corner} in the \textit{string direction} and lying on either $l$ or $l'$\;
			}
			\ElseIf{there exists a \textit{Potential Weber meeting node} lying on the quadrants}
			{Select the \textit{Potential Weber meeting node} in $\mathcal H^{++}$ which is farthest from the \textit{leading corner} contained in $\mathcal H^{++}$ in the \textit{string direction} \;}
			}
		   \ElseIf{$C(t)$ is symmetric with a meeting node on the center of rotation $c$}
			{
			Select the \textit{meeting node} on $c$\;
			}
		}
		\caption{Target Weber meeting node Selection()}
		\label{code1}
	\end{algorithm2e}
\subsubsection{Target Weber meeting node Selection}
	In this phase, the \textit{Weber meeting node} for \textit{gathering} is selected. The \textit{target meeting node} must remain invariant during the execution of the algorithm. Depending on the class of configuration to which $C(t)$ belongs, the \textit{target Weber meeting node} is selected according to Table \ref{tab:target}. The pseudo-code corresponding to this phase is given in Algorithm \ref{code1}. Consider the case when the $C(t) \in \mathcal I_4^a$ and there exists a \textit{Weber meeting node} on the quadrants. Further, assume that $|W_p(t)| \geq 2$. If there exist two \textit{string directions} corresponding to the unique \textit{leading corner} in $\mathcal H^{++}$, the \textit{target Weber meeting node} is selected as the \textit{Potential Weber meeting node} in $\mathcal H^{++}$ which appears first in the string $\alpha_i$. We have the following observation.
	\begin{observation} \label{o2}
	If the \textit{meeting nodes} are symmetric with respect to a unique line of symmetry $l$, and there exists at least one \textit{meeting node} on $l$, then the \textit{meeting nodes} on $l$ are orderable.
	\end{observation}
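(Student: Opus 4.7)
The plan is to exploit the uniqueness of $L$ in order to canonically distinguish the two ends of $L$; this will then induce a canonical total order on the \textit{meeting nodes} lying on $L$ by their distance from the distinguished end.

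First, since $M$ is invariant under the reflection $\sigma_L$ about $L$, the minimum enclosing rectangle of $M$ is itself symmetric about $L$, so $L$ meets the boundary of $M E R$ in exactly two distinct points $P$ and $Q$; these are either two opposite corners (if $L$ is a diagonal of a square $M E R$) or the midpoints of two opposite sides (if $L$ is horizontal or vertical). Once we have fixed a canonical choice between $P$ and $Q$ as the starting end, the \textit{meeting nodes} of $M\cap L$ are immediately ordered by their hop-distance from that end, so the whole problem reduces to choosing canonically between $P$ and $Q$.

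To make the canonical choice, I would associate to each endpoint a scan-string of the \textit{meeting nodes}, generalising the $s_{AD}$, $s_{AB}$ construction already defined in the terminology: for a corner endpoint use that construction directly, and for a side-midpoint endpoint use the obvious analogue which scans outward from the midpoint along $L$ and records hop-distances to the \textit{meeting nodes} in the order encountered. Let $\tau_P$ and $\tau_Q$ denote the two resulting strings, and select the endpoint whose string is lexicographically smaller. The claim then reduces to showing $\tau_P\neq\tau_Q$.

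The main obstacle is this last step. Suppose for contradiction that $\tau_P=\tau_Q$. Then $M$ admits a grid isometry that swaps $P$ and $Q$; since $P,Q\in L$ and the grid structure must be preserved, the only candidates are the reflection $\sigma_\perp$ about the perpendicular bisector of $PQ$ and the $180^\circ$ rotation $\rho$ about its midpoint. If $\sigma_\perp$ is a symmetry of $M$, then $\sigma_\perp$ is a second line of symmetry of $M$, directly contradicting the uniqueness of $L$. If instead $\rho$ is a symmetry of $M$, then so is $\rho\circ\sigma_L$; a direct coordinate computation (placing the midpoint of $PQ$ at the origin and $L$ along the $x$-axis) shows that $\rho\circ\sigma_L$ coincides with $\sigma_\perp$, and we are back in the previous case. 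In either case we reach a contradiction, so $\tau_P\neq\tau_Q$, the starting endpoint is canonically selected, and the \textit{meeting nodes} on $L$ are orderable.
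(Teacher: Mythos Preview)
The paper does not supply a proof of this observation; it is stated and then immediately followed by the working definition ``the northernmost \textit{meeting node} on $l$ is the one farthest from the \textit{leading corner(s)}.'' The implicit justification is that the \textit{leading corners} are computed from $M$ alone, and uniqueness of $L$ prevents the leading corners from coinciding with their images under any symmetry that would swap the two ends of $L$; hence ``farthest from the leading corners'' singles out one end of $L$ and induces the order.

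Your argument is correct and is essentially this same idea made explicit and self-contained: rather than invoking the leading-corner machinery, you attach scan strings $\tau_P,\tau_Q$ directly to the two ends of $L$ and show that $\tau_P=\tau_Q$ would force either $\sigma_\perp$ or $\rho$ to be a symmetry of $M$, and in the second case $\rho\circ\sigma_L=\sigma_\perp$ anyway, contradicting uniqueness of $L$. Your enumeration of the isometries of $MER$ that swap $P$ and $Q$ is accurate in both the side-midline and diagonal cases, so the contradiction goes through. The payoff of your route is that it isolates exactly the group-theoretic fact being used, whereas the paper's route is shorter because the leading-corner apparatus has already been built.

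One point to tighten: your description of $\tau_P$, $\tau_Q$ for a side-midpoint endpoint (``scans outward from the midpoint along $L$ and records hop-distances to the \textit{meeting nodes}'') must encode \emph{all} of $M$, not merely $M\cap L$. If you only recorded the nodes on $L$, then $\tau_P=\tau_Q$ would say only that $M\cap L$ is palindromic, which does not by itself force a second symmetry of $M$; the off-axis part of $M$ is what breaks the tie. Since you explicitly say you are generalising the full-grid scans $s_{AD},s_{AB}$, this is presumably what you intend, but the sentence as written is ambiguous.
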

The northernmost \textit{meeting node} on $l$ is defined as the \textit{meeting node} on $l$ which is farthest from the \textit{leading corner(s)}. Similarly, the northernmost robot on $l$ is defined.

		\begin{figure}[h]
			\centering
		
			{
				\includegraphics[width=0.250\columnwidth]{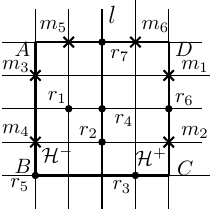}
			}
		
			\caption{Balanced $\mathcal I_3^ a$ configuration. $m_2$ and $m_4$ are the \textit{Weber meeting nodes} lying on different half-planes. $m_2$ is selected as the \textit{target Weber meeting node}. Robot $r_2$ is selected as the \textit{leading robot}}.
			\label{leading}
		\end{figure}
\subsubsection {Leading Robot Selection} 
If the initial configuration is balanced and asymmetric, a robot $r$ is selected as a \textit{leading robot} in the \textit{Leading Robot Selection} phase (Figure \ref{leading}). The \textit{leading robot} moves towards the half-plane or quadrant containing the \textit{target Weber meeting node} $m$. While $r$ reaches the half-plane or the quadrant containing $m$, the configuration transforms into an unbalanced configuration, and the asymmetry of the configuration remains invariant. Since the initial configuration is balanced, assume that $C(t)\in \mathcal I_3 \cup \mathcal I_4$. Further, assume that the initial configuration does not satisfy the condition $C_3$. Depending on the class of configuration to which $C(t)$ belongs, the \textit{leading robot} is selected according to Table \ref{tab:lead}. In case, the configuration is in  $\mathcal I_4^a$ and there exists a robot on $l$ (resp. $l'$), the \textit{leading robot} first move along the line $l$ (resp. $l'$) and when it becomes collinear with $m$, it starts moving along $l'$ (resp. $l$).
\begin{table}
\centering
\begin{tabular}{ |p{6.5cm}|p{7cm}|  }
 \hline
 \multicolumn{2}{|c|}{\textbf{Leading Robot Selection}} \\
 \hline
\textbf{ Configuration} $\boldsymbol{C(t)}$ & \textbf{Leading Robot} \\ 
 \hline
 {\small $\mathcal I_3^a$ $\land$ there exists a robot on $l$} & {\small The northernmost robot on $l$} \\
\hline
  {\small$\mathcal I_{3}^a$ $\land$ there does not exist any robot on $l$} & {\small The robot closest to $l$ and lying on $\mathcal H^{-}$. Ties are broken by considering the robot on $\mathcal H^{-}$ which is farthest from the \textit{leading corner} contained in $\mathcal H^{-}$ in the \textit{string direction}} \\ 
 \hline
  {\small $\mathcal I_4^a$ $\land$ there exists a robot either on $l$ or $l'$} & {\small The robot closest to the \textit{target Weber meeting node} and lying on $l$ or $l'$. Ties are broken by considering the robot either on $l$ or $l'$ which is closest from the \textit{leading corner} contained in $H^{++}$ in the \textit{string direction}} \\
  \hline
{\small $\mathcal I_4$ $\land$ there does not exist any robot on $l$ and $l'$ $\land$ there exists a \textit{non-guard} robot in a quadrant adjacent to $\mathcal H^{++}$} & {\small The robot lying on a quadrant adjacent to $\mathcal H^{++}$ and closest to the \textit{target Weber meeting node}. Ties are broken by considering the robot, which is closest from the \textit{leading corner} contained in $H^{++}$ in the \textit{string direction}}\\
  \hline
 {\small $\mathcal I_4$ $\land$ there does not exist any robot on $l$ and $l'$ $\land$ there does not exist any \textit{non-guard} robot in the quadrants adjacent to $\mathcal H^{++}$} & {\small The robot lying on the quadrant non-adjacent to $\mathcal H^{++}$ and closest to the \textit{target Weber meeting node}. Ties are broken by considering the robot, which is closest from the \textit{leading corner} contained in $H^{++}$ in the \textit{string direction}}\\
  \hline
\end{tabular}
\caption{\label{tab:lead}Leading Robot Selection.}
\end{table}

\subsubsection{Symmetry Breaking}
 In this phase, all the symmetric configurations that can be transformed into asymmetric configurations are considered. A unique robot is identified that allows the transformation. We have the following cases.
 \begin{enumerate}
     \item $C(t)\in\mathcal{I}_{3}^{b2}$. In this class of configurations, at least one robot exists on $l$. Let $r$ be the northernmost robot on $l$. $r$ moves towards an adjacent node that does not belong to $l$, and the configuration becomes asymmetric. 
     
     \item $C(t)\in\mathcal{I}_{4}^{b2}$. In this class of configurations, there exists a robot (say $r$) on $c$. The robot $r$ moves towards an adjacent node. If the configuration admits rotational symmetry with multiple lines of symmetry and there is a robot $r$ at the center, $r$ moves towards an adjacent node. This movement creates a unique line of symmetry $l'$. However, the new position of $r$ might have a multiplicity. If that happens to be the northernmost robot on $l'$, moving robots from there might still result in a configuration with a line of symmetry. Even so, the unique line of symmetry $l'$ would still contain at least one robot position without multiplicity, and the number of robot positions on $l'$ will be strictly less than the number of robots on the line of symmetry in the original configuration. Thus, the repeated movement of the robot on $l'$ guarantees to transform the configuration into an asymmetric configuration. 
 \end{enumerate}
\subsubsection {Creating Multiplicity on Target Weber meeting node}
	The \textit{target Weber meeting node} $m$ is selected in the \textit{Target Weber meeting node Selection} phase. Since there is a unique \textit{target Weber meeting node} $m$, all the \textit{non-guard} robots move towards $m$ in the \textit{Creating Multiplicity on Target Weber meeting node} phase. Note that, since the \textit{guards} do not move during this phase, the $M E R$ remains invariant. As a result, $m$ remains invariant. Eventually, a robot multiplicity is created on $m$, while the \textit{non-guards} moves towards it. Depending on the class of configuration to which $C(t)$ belongs, the following cases are to be considered.
	\begin{enumerate}
	    \item $C(t) \in \mathcal I_1:$ All the robots moves towards the unique \textit{Weber meeting node} $m$.
	    \item $C(t) \in \mathcal I_2:$ All the \textit{non-guards} move towards the unique \textit{target meeting node} $m$.
	    \item $C(t) \in \mathcal I_3:$ If $C(t) \in \mathcal I_3^a$ and there exists a \textit{Weber meeting node} on $l$, each \textit{non-guards} move towards the \textit{target meeting node} $m$. 
	    
	    \noindent Next, consider the case when $C(t) \in \mathcal I_3^a$ and there does not exist any \textit{Weber meeting node} on $l$. A \textit{leading robot} in the \textit{Leading Robot Selection} phase transforms a balanced configuration into an unbalanced configuration. All the \textit{non-guard} robots from $\mathcal H^{-}$ move towards $m$. This movement is required in order to ensure that $\mathcal H^{+}$ remains invariant. While such robots reach $\mathcal H^{+}$, all the \textit{non-guard} robots in $\mathcal H^{+}$ move towards $m$, thus creating a multiplicity on $m$. 
	    
	    \noindent Finally, if $C(t) \in \mathcal I_3^{b2}$, each \textit{non-guard} which is closest to $m$, moves towards $m$ either synchronously or there may be a possible \textit{pending move} due to the asynchronous behavior of the scheduler. Ties are broken by considering the closest robots which are farthest from the \textit{leading corners} in their respective \textit{string directions}.
	     \item $C(t) \in \mathcal I_4:$ First consider the case when the \textit{target Weber meeting node} $m$ is the center of rotational symmetry. Each robot moves towards $m$. 
	     
	     \noindent Next, consider the case when $C(t) \in \mathcal I_4^a$ and the \textit{target Weber meeting node} $m$ is on $\mathcal H^{++}$. After the \textit{Leading Robot Selection} phase, each \textit{non-guard} robot in the quadrants different from $\mathcal H^{++}$ as well as on the lines $l$ or $l'$ moves towards the \textit{target Weber meeting node} $m$ in $\mathcal H^{++}$. First, the \textit{non-guards} in the quadrants adjacent to $\mathcal H^{++}$ and the robots on $l$ or $l'$ moves towards $\mathcal H^{++}$. While they reach $\mathcal H^{++}$, the \textit{non-guards} in the quadrants non-adjacent to $\mathcal H^{++}$ moves towards $\mathcal H^{++}$. This movement is required in order to ensure that $\mathcal H^{++}$ remains invariant. While such \textit{non-guards} reach $\mathcal H^{++}$, each \textit{non-guards} in $\mathcal H^{++}$ moves towards $m$, thus creating a robot multiplicity on $m$. Otherwise, consider the case when $m$ is on either on $l$ or $l'$. First all the \textit{non-guards} in $\mathcal H^{++}$ moves towards $m$. While they reach $l$ or $l'$, the other \textit{non-guards} moves towards $m$. This movement is required in order to ensure that $\mathcal H^{++}$ remains invariant. Finally, a robot multiplicity is created at $m$.
	     
	     \noindent If there exist eight \textit{Potential Weber meeting nodes}, then there exist exactly two \textit{Potential Weber meeting nodes} in $\mathcal H^{++}$. There exist two \textit{string directions} corresponding to the unique \textit{leading corner} in $\mathcal H^{++}$. The robot closest to the \textit{target meeting node} $m$ and appearing first in $\alpha_i$, which is not on $m$, moves towards $m$. After such a move of the robot, it ensures that $m$ remains invariant during the procedure. The procedure proceeds similar to as before.
	    \end{enumerate}
	    Note that if the configuration is asymmetric or symmetric with respect to rotational symmetry, there may exist at most four robot positions that are not on $m$ during this phase. Otherwise, if the configuration is symmetric with respect to a horizontal or vertical line of symmetry, there may exist at most six robot positions that are not on $m$.
\subsubsection{Finalization of Gathering} 
Let $m$ be the \textit{target Weber meeting node}, where a robot multiplicity is created during the \textit{Creating Multiplicity on Target Weber meeting node} phase. The following are the cases in which the robots will identify that the {\it Finalization of Gathering} is in progress:
	\begin{enumerate}
	    \item The configuration has at most four robot positions that are not on $m$ containing robot multiplicity. Moreover, each side of the $M E R $ contains at most one robot position.
	    \item The configuration has exactly six robot positions that are not on $m$ containing robot multiplicity. Moreover, there exist exactly two sides of the $M E R $ that contain two robot positions and are symmetric images of each other.
	    \end{enumerate}

\noindent In this phase, all the \textit{guards} move towards $m$. During their movement, they do not create any multiplicity on a \textit{Weber meeting node} other than $m$. In order to ensure this, all the \textit{guards} first move along the boundary of $MER$, and when it becomes collinear with $m$, it starts moving towards $m$. A \textit{guard} robot moves by minimizing the Manhattan distance between $m$ and itself. This implies that during their movement, no other multiplicity would be created on any other \textit{Weber meeting node} and \textit{gathering} would be finalized on $m$.
\subsection{Optimal Gathering()}
Our main algorithm \textit{Optimal Gathering()} considers the following cases. 
If $C(t) \in \mathcal I_1$, then each robot finalizes the \textit{gathering} on the unique \textit{Weber meeting node}.

\noindent Consider the case when the \textit{meeting nodes} are asymmetric. There exists a unique \textit{Potential Weber meeting node}. The \textit{guards} are selected in the \textit{Guard Selection} phase. Each \textit{non-guard} moves towards the unique \textit{Potential Weber meeting node}, creating a multiplicity on it. Finally, the \textit{guards} moves towards the multiplicity and finalizes the \textit{gathering} on it.

\noindent Next, consider the case when the configuration is balanced and asymmetric. A \textit{leading robot} is selected in the \textit{Leading Robot Selection} phase, which transforms the configuration into an unbalanced configuration. The \textit{guards} are selected in the \textit{Guard Selection phase}. In the \textit{Creating Multiplicity on the Target Weber meeting node} phase, each \textit{non-guard} moves towards the \textit{target Weber meeting node}, selected in the \textit{Target Weber meeting node Selection} phase. Finally, the \textit{guards} moves towards the multiplicity and finalizes the \textit{gathering}.

\noindent If $C(t)$ is symmetric and there exists a \textit{Weber meeting node} on $l \cup \lbrace c \rbrace$, the \textit{gathering} is finalized on the \textit{target Weber meeting node} $m$ selected in the \textit{Target Weber meeting node Selection} phase. Otherwise, if the configuration is symmetric and there exists a robot on either $l$ or $c$, then in the \textit{Symmetry Breaking} phase, the configuration is transformed into an asymmetric configuration. Note that, in case $C(t) \in \mathcal I_3^b$, there may exist exactly six robots that are selected as \textit{guards}. In case $n=7$, there must exist at least one robot position on $l$. The northernmost robot on $l$ moves towards an adjacent node away from $l$, if there does not exist any \textit{Weber meeting nodes} on $l$. Hence, the configuration becomes asymmetric, and the algorithm proceeds similarly, as in the asymmetric case for $n=7$. Otherwise, if there exists at least one \textit{Weber meeting node} on $l$, the northernmost \textit{Weber meeting node} $m$ on $l$ is selected as the \textit{target Weber meeting node}. The closest robot on $l$ and the northernmost in case of a tie, moves towards $m$. While the robot moves towards $m$, it remains invariant. After the robot reaches $m$, $m$ is uniquely identifiable and the \textit{gathering} is finalized in the \textit{Finalization of Gathering} phase.
\section{Correctness} \label{s5}
In this section, we describe the correctness of our proposed algorithm. Lemmas \ref{correc1} and \ref{correc2} proves that the \textit{leading robot} remains invariant during the movement towards its destination.  
	\begin{lemma}
		If $C(t) \in \mathcal I_3^a$, then in the \textit{Leading Robot Selection} phase, the \textit{leading robot} remains the unique robot while it moves towards the half-plane $\mathcal H^{+}$. 
		\label{correc1}
	\end{lemma}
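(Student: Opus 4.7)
The plan is to verify the uniqueness of the leading robot throughout its movement by analyzing separately the two sub-cases of the selection rule of Table~\ref{tab:lead} for $\mathcal{I}_3^a$, and then showing that in each intermediate configuration encountered during transit, the selection rule still designates the same moving robot.

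First I would consider the sub-case where at least one robot lies on $l$ in the initial configuration. By Observation~\ref{o2}, the meeting nodes on $l$ are totally orderable, and the induced ordering lets us speak of a unique northernmost robot $r$ on $l$; by the selection rule, $r$ is the leading robot. Since $r$ moves perpendicular to $l$ into $\mathcal H^{+}$ and no other robot has been activated to move yet in this phase, after a single step $r$ occupies the position of $\mathcal H^{+}$ adjacent to $l$ that is collinear with its previous location on $l$. I would argue that $r$ remains uniquely identifiable in the new configuration because it is the only robot of $\mathcal H^{+}$ situated on the grid line perpendicular to $l$ through a node of $l$ that has ``just become empty'', and because subsequent activations of $r$ find $r$ moving monotonically farther from $l$ while every other robot is stationary. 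Hence no other robot can mimic $r$'s role under the selection rule.

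Second I would handle the sub-case where no robot lies on $l$. The leading robot $r$ is then the unique robot in $\mathcal H^{-}$ closest to $l$, tie-broken by maximum distance from the leading corner of $\mathcal H^{-}$ along the string direction; uniqueness in the initial configuration follows directly from that tie-breaking rule. The robot $r$ moves toward $l$ step by step until it reaches $l$, and then makes one further step into $\mathcal H^{+}$. At every intermediate step, since no other robot has been activated to move during this phase, $r$ is strictly closer to $l$ than every other robot of $\mathcal H^{-}$, so the same rule continues to designate $r$. Once $r$ lands on $l$, it is the unique robot of $l$, and the first sub-case's rule returns $r$ again; the argument then reduces to that of the first sub-case.

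The main obstacle will be coping with the asynchronous scheduler. Because fairness only guarantees eventual activation, the leading robot may be caught between its own LCM cycles while other robots also run cycles; I would therefore argue by induction on the number of moves already performed by $r$ that the invariant ``$r$ is the unique leading robot under the current intermediate configuration'' holds at every activation. A subtle point is verifying that $r$'s movement cannot by coincidence restore a symmetry that would invalidate the selection rule: the meeting nodes themselves admit only the line $l$ as symmetry axis, the initial robot configuration was asymmetric, and throughout the transit $r$ lies either strictly on one side of $l$ or at a position of $l$ that is not mirrored by any other robot, so the configuration remains asymmetric with respect to $l$ at every intermediate step.
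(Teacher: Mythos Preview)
Your proposal has a genuine gap in the second sub-case. You argue that after each step of $r$ toward $l$, ``$r$ is strictly closer to $l$ than every other robot of $\mathcal H^{-}$, so the same rule continues to designate $r$.'' But this presupposes that $\mathcal H^{-}$ is still the same half-plane. In the balanced case ($C_{21}$ with $l$ horizontal or vertical), Table~\ref{tab:half} defines $\mathcal H^{+}$ as the half-plane \emph{not} containing the \textit{key corner}, and the key corner is determined by which of the strings $\alpha_{AD}$, $\alpha_{BC}$ is lexicographically smallest. When $r$ moves, both strings change; if the order between them were to flip, the key corner would switch sides, $\mathcal H^{+}$ and $\mathcal H^{-}$ would swap, and the selection rule would then look for a leading robot in the \emph{other} half-plane. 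Your closing paragraph verifies only that the configuration stays asymmetric with respect to $l$, which guarantees that a unique key corner exists, but not that it is the \emph{same} one as before the move.

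This is precisely the point the paper's proof addresses: it fixes the key corner $A$, lets $k$ be the first position where $\alpha_{AD}$ and $\alpha_{BC}$ differ, lets $i$ be the position of $r$, and shows by a case analysis on the relative order of $i$ and $k$ that $\alpha_{AD}<_l\alpha_{BC}$ is preserved after $r$'s step. Without an argument of this kind, your invariant ``$r$ is the unique leading robot'' cannot be maintained inductively. A secondary issue: in your first sub-case you appeal to $r$ being identifiable via a node that has ``just become empty'', but the robots are oblivious and cannot observe history; fortunately that sub-case needs only a single step (after which the configuration is already unbalanced), so the flawed identification argument is unnecessary rather than fatal.
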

	\begin{proof}
		Let $C(t)$ be any balanced configuration that belongs to $\mathcal I_3^{a}$. Since the configuration is balanced and asymmetric, the number of robots in the two half-planes delimited by $l$ are equal and there exists a unique \textit{key corner}. If there exists at least one robot position on $l$, then the northernmost robot on $l$ is the \textit{leading robot}. The northernmost robot moves towards an adjacent node away from $l$, and the configuration becomes unbalanced. Consider the case when there does not exist any robot position on $l$. Without loss of generality, assume that $l$ is a vertical line of symmetry. Let $r$ be the \textit{leading robot} in $\mathcal H^{-}$ selected in the \textit{Leading Robot Selection} phase. Without loss of generality, let $A$ be the unique \textit{key corner} and $\alpha_{AD}=a_1, a_2,\ldots, a_{p q}$ is the unique smallest lexicographic string associated to the corner $A$. Similarly, let $B$ be the other \textit{leading corner} and $\alpha_{BC}=b_1, b_2,\ldots, b_{p q}$ be the string associated to $B$. Let $u_i$ and $v_i$ denote the nodes, which the positions $a_i$ and $b_i$ represent in $\alpha_{AD}$ and $\alpha_{BC}$, respectively. Since the \textit{meeting nodes} are symmetric, $f_t(u_i)=f_t(v_i)$, for each $i= 1, 2 \ldots, p q$. As $\alpha_{AD}=a_1, a_2,\ldots, a_{p q}$ is the unique smallest lexicographic string among the $\alpha_i's$, there must exist a position $k'$ such that $\lambda_t(u_{k'})=0< \lambda_t(v_{k'})=1$. Without loss of generality, let $i$ be the position of the \textit{leading robot} in $\alpha_{AD}$. Let $k$ be the first position, where $\lambda_t(u_{k})$ and $\lambda_t(v_{k})$ differs. Note that, $\lambda_t(u_{k})$=0 and $\lambda_t(v_{k})=1$. We have to prove that after the movement of the \textit{leading robot}, $\alpha_{AD} <_l \alpha_{BC}$, where $'<_l'$ denotes the relation that $\alpha_{AD}$ is lexicographically smaller than $\alpha_{BC}$. Assume that at time $t'$, the \textit{leading robot} moves towards an adjacent node. Depending on the possible values of $i$ and $k$ in $\alpha_{AD}$, the following cases are considered. 
		\setcounter{case}{0}
		\begin{case}
		\normalfont
		The position of $i$ is less than $k$ in $\alpha_{AD}$. While the \textit{leading robot} moves towards $l$, $\lambda_t(u_i)$ becomes 0, but $\lambda_t(v_i)$ equals 1. Hence, after the movement of the \textit{leading robot} towards an adjacent node, $\alpha_{AD} <_l \alpha_{BC}$.  
		\end{case}
		\begin{case}
		\normalfont
		The position of $i$ is equal to $k$ in $\alpha_{AD}$. Since each robot is deployed at the distinct nodes of the grid in the initial configuration, this case is not possible.
		\end{case}
		\begin{case}
		\normalfont
		The position of $i$ is greater than $k$ in $\alpha_{AD}$. While the \textit{leading robot} moves towards $l$, the position $k$ remains invariant. Hence, after the movement of the \textit{leading robot} towards an adjacent node, $\alpha_{AD} <_l \alpha_{BC}$.  
		\end{case}
		Note that, after a single movement of the \textit{leading robot} towards $l$, it becomes the unique robot that is eligible to move towards $\mathcal H^{+}$. Since $\alpha_{AD}$ remains the unique lexicographically smallest string at $t'$, $\mathcal H^{+}$ remains invariant. Clearly, after a finite number of movements towards $l$, $H^{+}$ remains invariant, and ultimately, the configuration becomes unbalanced. The proof is similar when the \textit{meeting nodes} admits a horizontal or a diagonal line of symmetry.
	\end{proof}
		\begin{lemma}
		If $C(t) \in \mathcal I_4^a $, then in the \textit{Leading Robot Selection} phase, the \textit{leading robot} remains the unique robot while it moves towards the \textit{target Weber meeting node}. \label{correc2}
	\end{lemma}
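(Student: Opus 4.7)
The plan is to mirror the string-comparison argument of Lemma~\ref{correc1}, now applied over the four quadrants induced by $l$ and $l'$. Let $C(t)\in\mathcal I_4^a$ be balanced and assume the center $c$ does not contain a Weber node, so the Leading Robot Selection phase is actually executed. Because the meeting nodes are symmetric with respect to the rotation while the whole configuration is not, the corner strings $\alpha_i$ at the leading corners of $MER$ cannot all coincide, so there is a unique smallest lexicographic string, say $\alpha_{AD}$, identifying both $\mathcal H^{++}$ and, via Table~\ref{tab:target}, the target Weber node $m$. My goal is to show that as the chosen leading robot $r$ moves one step at a time toward $m$, $\alpha_{AD}$ remains the unique smallest string and $r$ remains the unique robot singled out by Table~\ref{tab:lead}.

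First I would dispose of the case in which $r$ lies on $l$ or $l'$. Here $r$ is the robot of $l\cup l'$ closest to $m$, with ties broken by closeness to the leading corner in $\mathcal H^{++}$. As $r$ hops along $l$ (or $l'$) toward $m$, no other robot on $l\cup l'$ is moving, so the distances of all competitors to $m$ are unchanged while $d(r,m)$ strictly decreases; hence $r$ remains uniquely closest. When $r$ eventually becomes collinear with $m$ along the other axis, the rule of the phase prescribes a switch of direction, and the same monotone-distance argument carries over. In the remaining case $r$ lies inside a quadrant adjacent to $\mathcal H^{++}$, or, if no non-guard exists there, inside the non-adjacent quadrant. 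A one-step move of $r$ toward $m$ again strictly decreases $d(r,m)$ while leaving the distances of the other robots unchanged, so $r$ remains the unique closest, and the key-corner tie-breaker still selects the same candidate.

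The core of the argument is checking that the corner strings do not realign after a single move of $r$, which is what keeps $\mathcal H^{++}$, and hence $m$, invariant. Fix $\alpha_{AD}=a_1,\ldots,a_{pq}$ and any companion string $\alpha_j$ obtained from $\alpha_{AD}$ by the rotational symmetry (one such companion under $180^\circ$, three under $90^\circ$). Let $k$ be the first position at which $\lambda_t(u_k)=0<\lambda_t(v_k)=1$ for the corresponding nodes $u_k$ of $\alpha_{AD}$ and $v_k$ of $\alpha_j$, and let $i$ be the position index of $r$ inside $\alpha_{AD}$. The three subcases $i<k$, $i=k$, $i>k$ are handled exactly as in Lemma~\ref{correc1}: in the first, the move only clears $\lambda$ at $u_i$ and preserves the inequality at $k$; in the third, the index $k$ is untouched; and the middle subcase is impossible since robots start on distinct nodes. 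The main obstacle, and the reason this argument is more delicate than the one for $\mathcal I_3^a$, is the $90^\circ$-rotation subcase: three simultaneous string comparisons must be maintained rather than one, and the robot's $L$-shaped trajectory crosses from one axis to the other, so I would have to carry out a separate short verification at the turning step to confirm that neither the selection rule of Table~\ref{tab:lead} nor any of the companion-string inequalities breaks as $r$ transitions from $l$ to $l'$.
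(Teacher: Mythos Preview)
There are two genuine gaps.

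First, you have the lexicographic direction reversed. In the $\mathcal I_4^a$ balanced case, Table~\ref{tab:quad} defines $\mathcal H^{++}$ as the quadrant whose leading corner carries the \emph{largest} string $\alpha_i$, not the smallest; the paper's proof accordingly fixes $\alpha_{AD}$ as the largest and verifies it stays lexicographically larger than each companion string after the move. Your choice ``unique smallest lexicographic string, say $\alpha_{AD}$, identifying $\mathcal H^{++}$'' is simply the wrong convention here, and the subcase inequalities you copy from Lemma~\ref{correc1} (``$\lambda_t(u_k)=0<\lambda_t(v_k)=1$'') point the wrong way as a consequence.

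Second, the three-case split $i<k$, $i=k$, $i>k$ does not transfer from reflection to rotation. A single physical move of $r$ changes $\alpha_{AD}$ at index $i$ \emph{and} simultaneously changes the companion string at a different index $j$ (the position of the same robot in the rotated scan). Whether the inequality survives depends on the relative order of all three indices $i,j,k$, not just $i$ and $k$. The paper handles this by a five-case analysis on $(i,j,k)$, done separately for the adjacent-quadrant case (where the geometry forces $i<j$) and the non-adjacent-quadrant case (where $i>j$); several subcases need the extra observation that $u_{i-1}$ was empty before the move (else $r$ would not have been selected), and one subcase uses that a multiplicity of weight $2$ can arise at $u_{i-1}$. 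Your sketch, which tracks only $i$, cannot rule out that the companion string overtakes $\alpha_{AD}$ at index $j$ or $j\pm 1$ after the move. The monotone-distance argument you give for ``$r$ stays the selected robot'' is fine as far as it goes, but it does not by itself keep $\mathcal H^{++}$ fixed, and that is the invariant the phase relies on.
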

	\begin{proof}
	Let $C(t)$ be any balanced configuration that belongs to $\mathcal I_4^{a}$. Since the configuration is balanced and asymmetric, there exist at least two quadrants that contain the maximum number of \textit{Potential Weber meeting nodes} with the maximum number of robots on such quadrants. We have to prove that while the \textit{leading robot} moves towards the \textit{target Weber meeting node}, the quadrant $\mathcal H^{++}$ remains invariant. First, consider the case when the \textit{leading robot} $r$ is on either $l$ or $l'$. Note that in this case, $r$ may be one or more than one node away from $\mathcal H^{++}$. There is nothing to prove when $r$ is one node away from $\mathcal H^{++}$. In this case, a move of $r$ transforms the configuration into an unbalanced configuration. Therefore, consider the case when $r$ is more than one node away from $\mathcal H^{++}$. Without loss of generality, let $r$ be on $l$. Let $MER=ABCD$ be such that the corner $C$ is the corner diagonally opposite to $A$ and the corners $A$ and $B$ are separated by line $l$. Similarly, $A$ and $D$ are the corners separated by line $l'$. $\mathcal H^{++}$ is the quadrant containing $A$. Let $\alpha_{AD}=a_1, a_2, \ldots a_{pq}$ and $\alpha_{BC}=b_1, b_2, \ldots b_{pq}$ be the strings associated to the corners $A$ and $B$. While $r$ moves along $l$, we have to prove that that $\alpha_{AD}$ remains lexicographic larger than $\alpha_{BC}$. It is noteworthy that $\alpha_{AD}$ is lexicographic larger than $\alpha_{CB}$ and $\alpha_{AD}$ while $r$ moves. Note that, we have consider the case when the \textit{string directions} are along the width of the rectangle. Let $i$ be the position of \textit{leading robot} in $\alpha_{AD}$ and $\alpha_{BC}$. Let $u_i$ and $v_i$ denote the nodes, which the positions $a_i$ and $b_i$ represent in $\alpha_{AD}$ and $\alpha_{BC}$, respectively. Since the \textit{meeting nodes} are symmetric, $f_t(u_i)=f_t(v_i)$, for each $i= 1, 2 \ldots, p q$. After a movement of the \textit{leading robot} along the line $l$, note that $\mathcal H^{++}$ remains invariant. After a finite number of movements, the robot $r$ becomes one node away from $\mathcal H^{++}$, and the proof proceeds similarly as before. Next, consider the case when the \textit{leading} robot is on a quadrant adjacent to $\mathcal H^{++}$. Without loss of generality, assume that the \textit{leading robot} is on $\mathcal H^{+-}$. While the \textit{leading robot} moves, it can be observed that $\alpha_{AD}$ is lexicographically larger than $\alpha_{CB}$ and $\alpha_{DA}$. We have to prove that $\alpha_{AD}$ remains lexicographic larger than $\alpha_{BC}$ while the \textit{leading robot} moves. Let $i$ and $j$ be the positions of the \textit{leading corner} in $\alpha_{AD}$ and $\alpha_{BC}$, respectively. Note that $ i<j$, as the \textit{leading robot} is selected on $\mathcal H^{+-}$. Let $k$ be the first position for which $b_k < a_k$. We have the following cases.
		\setcounter{case}{0}
		\begin{case}
		\normalfont
		$i< j <k$. Note that $u_{i-1}$ cannot be a robot position, otherwise $r$ would not be selected as a \textit{leading robot}. After a move of $r$, $u_{i-1}$ is a robot position but $v_{i-1}$ cannot be a robot position.
		\end{case}
		\begin{case}
		\normalfont
		$i<j=k$. Since each robot is deployed at the distinct nodes of the grid in the initial configuration, this case is not possible.
		\end{case}
			\begin{case}
		\normalfont
		$i<k<j$. After a move of $r$, $u_{i-1}$ is a robot position, but $v_{i-1}$ cannot be a robot position, as $k$ is the first position where $a_k >b_k$.
		\end{case}
			\begin{case}
		\normalfont
		$i=k<j$. The proof is similar to the previous case.
		\end{case}
			\begin{case}
		\normalfont
		$k<i<j$. After a move of $r$, it may be the case that $k=i-1$ in $\alpha_{AD}$. In that case, $\lambda_{t}(u_{i-1})=2$, but $\lambda_{t}(v_{i-1})=0$. Otherwise, the proof is similar as $a_k>b_k$.
		\end{case}
		The proof is similar when the \textit{string directions} are along the lengths of $MER$. Next, consider the case when the \textit{leading robot} $r$ is selected on a quadrant non-adjacent to $\mathcal H^{++}$. Without loss of generality, we assume that $r$ first starts moving towards $l'$. While the \textit{leading robot} moves, it can be observed that $\alpha_{AD}$ is lexicographically larger than $\alpha_{CB}$ and $\alpha_{DA}$. We have to prove that $\alpha_{AD}$ remains lexicographic larger than $\alpha_{BC}$ while $r$ moves. Let $i$ and $j$ be the positions of the \textit{leading corner} in $\alpha_{AD}$ and $\alpha_{BC}$, respectively. Note that $ i>j$, as the \textit{leading robot} is selected on $\mathcal H^{--}$. 
		We have the following cases.
		\setcounter{case}{0}
		\begin{case}
		\normalfont
		$i> j >k$. Note that $u_{i-1}$ cannot be a robot position, otherwise $r$ would not be selected as a \textit{leading robot}. After a move of $r$, $a_{i-1} \geq b_{i-1}$, depending on whether there exists a robot position on $b_{i-1}$ or not.
		\end{case}
		\begin{case}
		\normalfont
		$i>j=k$. Since each robot is deployed at the distinct nodes of the grid in the initial configuration, this case is not possible.
		\end{case}
			\begin{case}
		\normalfont
		$i=k>j$. After a move of $r$, $a_{i-1} \geq b_{i-1}$, depending on whether there exists a robot position on $b_{i-1}$ or not.
		\end{case}
			\begin{case}
		\normalfont
		$k>i>j$. Note that $a_{i-1}$ cannot be a robot position before the move, otherwise $r$ would not be selected as a robot position. After a move of $r$, $a_{i-1}$ is a robot position, but $b_{i-1}$ cannot be a robot position as $k$ is the first position where $a_k$ and $b_k$ differ. 
		\end{case}
			\begin{case}
		\normalfont
		$j<k<i$. After a move of $r$, it may be the case that $k=i-1$ in $\alpha_{AD}$. In that case, $\lambda_{t}(u_{i-1})=2$, but $\lambda_{t}(v_{i-1})=0$. Otherwise, the proof is similar as $a_k>b_k$.
		\end{case}
		From all the above cases, the \textit{leading robot} remains invariant while it moves towards its destination.
		\end{proof}
		The next three lemmas prove that the \textit{target Weber meeting node} remains invariant in the \textit{Creating Multiplicity on the Target Weber meeting node} phase.
		\begin{lemma}
		If $C(t)\in$ $\mathcal I_2\cup \mathcal I_3^{b1} \cup \mathcal I_4^{b1} $, then the \textit{target Weber meeting node} remains invariant in the \textit{Creating Multiplicity on Target Weber meeting node} phase.\label{correc3}
	\end{lemma}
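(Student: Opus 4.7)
The plan is uniform across the three classes: I exploit Lemma \ref{wbn} to show (i) the specific chosen target $m$ remains in $W(t')$, and (ii) no new Weber node is created, so any rule of the form ``pick the Weber node with canonical position $\pi$'' returns the same node as long as $\pi$ itself is a function of data that is invariant under the phase. The reflection line $l$, the rotation center $c$, the leading corner(s), the string direction(s), and the MER corners are all functions of $M$ and the MER. Since $M$ is fixed and the MER is preserved by the stationary guards in the cases $\mathcal I_2$ and $\mathcal I_3^{b1}$, while in $\mathcal I_4^{b1}$ the identification of $m$ uses only $c$ (which depends on $M$ alone), all the auxiliary combinatorial data used to name $m$ is invariant. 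Once that is granted, the only remaining question is how $W(t)$ evolves, and Lemma \ref{wbn} controls that precisely.

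I would then treat the three classes in turn. For $C(t)\in\mathcal I_2$, the target is the unique Potential Weber node, i.e.\ the last Weber node in the fixed string direction of the unique leading corner. Each non-guard moves along a shortest path toward $m$, so Lemma \ref{wbn}(1) gives $m\in W(t')$. If some node further along the string direction were in $W(t')$, it would lie in $W(t')\setminus W(t)$, contradicting Lemma \ref{wbn}(2); hence $m$ is still the last Weber node in the string direction. For $C(t)\in\mathcal I_3^{b1}$, the target is the northernmost Weber node on $l$, where ``northernmost'' is defined relative to the leading corners as the one farthest from them along $l$. Non-guards again move along shortest paths toward $m$; Lemma \ref{wbn}(1) keeps $m\in W(t')$, and Lemma \ref{wbn}(2) prevents any Weber node from appearing on $l$ north of $m$, so $m$ is still the northernmost Weber node on $l$. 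For $C(t)\in\mathcal I_4^{b1}$, the target is the meeting node at $c$, which is a Weber node by Observation \ref{m3}; all non-guard movements are shortest-path steps toward $m=c$, and Lemma \ref{wbn}(1) keeps $m$ a Weber node. Since the rule ``meeting node at $c$'' uniquely names $m$ and $c$ depends only on $M$, the target is invariant.

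The step I expect to require the most care is verifying the invariance at every intermediate moment under the asynchronous scheduler, since robots can be activated in arbitrary order and \emph{Look} phases can overlap with \emph{Move} phases of other cycles. The key observation is that every intermediate configuration $C(t'')$ with $t\leq t''\leq t'$ is obtained from $C(t)$ by a finite sequence of shortest-path single-robot moves (or single multiplicity moves) toward $m$; by induction on the length of this sequence and applying Lemma \ref{wbn} at each step, one gets $m\in W(t'')$ and $W(t'')\subseteq W(t)$ throughout. Combined with the invariance of the selection data from the first paragraph, this closes all three cases simultaneously, and a single unified closing sentence suffices to conclude that the target Weber node remains invariant during the \textit{Creating Multiplicity on Target Weber Node} phase.
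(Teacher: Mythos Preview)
Your proposal is correct and follows essentially the same approach as the paper's proof: invoke Lemma~\ref{wbn} for the invariance of the Weber-node set, use the immobility of the guards to keep the $MER$ (and hence the leading corners and string directions) fixed in the cases $\mathcal I_2$ and $\mathcal I_3^{b1}$, and observe that $c$ is determined by $M$ alone in the case $\mathcal I_4^{b1}$. Your treatment is somewhat more explicit---you separately invoke both conclusions of Lemma~\ref{wbn} and address the asynchronous interleaving by induction---whereas the paper's proof is terser but structurally identical.
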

	\begin{proof}
 	In the \textit{Creating Multiplicity on Target Weber meeting node} phase, all the \textit{non-guard} robots move towards the \textit{target Weber meeting node}. According to Lemma \ref{m1}, the \textit{Weber meeting node} remains invariant under the movement of robots towards itself. The $MER$ remains invariant unless the \textit{guard} moves. The following cases are to be considered.
	\setcounter{case}{0}
	\begin{case}
	\normalfont
	$C(t) \in \mathcal I_2$: Since the \textit{meeting nodes} are asymmetric, there exists a unique \textit{Potential Weber meeting node}. The unique \textit{Potential Weber meeting node} is selected as the \textit{target Weber meeting node}. The unique \textit{Potential Weber meeting node} of the configuration is defined with respect to the position of the \textit{leading corner}. The \textit{leading corner} remains invariant unless the $MER$ changes. As the \textit{guards} does not move in the \textit{Creating Multiplicity on Target Weber meeting node} phase, the $MER$ remains invariant. Hence, the \textit{target Weber meeting node} remains invariant.  
	\end{case}	
\begin{case}
		 \normalfont
		 $C(t) \in \mathcal I_3^{b1}$: The northernmost \textit{Weber meeting node} on $l$ is selected as the \textit{target Weber meeting node}. Since the northernmost agreement depends on the position of the \textit{leading corner(s)}, the agreement remains invariant unless the $MER$ changes. As the \textit{guards} does not move in the \textit{Creating Multiplicity on Target Weber meeting node} phase, the $MER$ remains invariant. Hence, \textit{target Weber meeting node} remains invariant. 
		 \end{case} 
		 \begin{case} 
		 \normalfont
		 $C(t)\in \mathcal I_4^{b1}$: The center of rotational symmetry $c$ is the \textit{target Weber meeting node}. Since $c$ is also the center of rotational symmetry for the \textit{meeting nodes} also, the \textit{target Weber meeting node} remains invariant.
		 \end{case}
	\end{proof}
		\begin{lemma}
		If $C(t)\in$ $\mathcal I_3^a$, then the \textit{target Weber meeting node} remains invariant in the \textit{Creating Multiplicity on Target Weber meeting node} phase.\label{correc4}
	\end{lemma}
	\begin{proof}
		The \textit{meeting nodes} are symmetric with respect to a single line of symmetry $l$. In the \textit{Creating Multiplicity on Target Weber meeting node} phase, all the \textit{non-guard} robots move towards the \textit{target Weber meeting node}. The $MER$ remains invariant unless the \textit{guard} moves. The following cases are to be considered.
		\setcounter{case}{0}
	\begin{case}
	\normalfont
		 There exists at least one \textit{Weber meeting node} on $l$. The northernmost \textit{Weber meeting node} on $l$ is selected as the \textit{target Weber meeting node}. Since the northernmost agreement depends on the position of the \textit{leading corner(s)}, and the \textit{leading corner(s)} remains invariant unless the $MER$ changes, the agreement remains invariant. Hence, the \textit{target Weber meeting node} remains invariant.
		 \end{case}
	\begin{case}
	\normalfont
		$C(t)$ satisfy $C_1$. In this case, we have to prove that  $\mathcal H^+$ remains invariant in the \textit{Creating Multiplicity on Target Weber meeting node} phase. Note that, in this phase, all the \textit{non-guards} move towards the \textit{target Weber meeting node}. According to Lemma \ref{wbn}, the \textit{Weber meeting nodes} remains invariant while the robots move towards it. As $M E R $ remains invariant unless the \textit{guard} robot moves, the \textit{leading corner(s)} remains invariant. Since the \textit{Potential Weber meeting nodes} are defined with respect to the positions of the \textit{leading corner(s)}, $\mathcal H^+$ remains invariant. Hence, the \textit{target Weber meeting node} remains invariant.
		\end{case}
		\begin{case}
		\normalfont
		$C(t)$ satisfy $C_2$. The following subcases are to be considered. 
		 \setcounter{subcase}{0}
		\begin{subcase}
		\normalfont
			$C(t)$ satisfy $C_{22}$. $\mathcal H^{+}$ is the half-plane that contains the maximum number of robots. All the \textit{non-guard} robots in $\mathcal H^{-}$ move towards the \textit{target Weber meeting node} in $\mathcal H^{+}$. During this movement of the robots, $\mathcal H^{+}$ still contains the maximum number of robots. Hence, the \textit{target Weber meeting node} remains invariant.
			\end{subcase}
			\begin{subcase}
			\normalfont
			 $C(t)$ satisfy $C_{21}$. The \textit{leading robot} in $\mathcal H^{-}$ moves towards the \textit{target Weber meeting node} in $\mathcal H^{+}$ resulting in transforming the configuration into an unbalanced configuration. While the \textit{leading robot} moves towards $\mathcal H^{+}$, the unique lexicographic smallest string $\alpha_i$ remains invariant according to Lemma \ref{correc1}. As the \textit{key corner} remains invariant, $\mathcal H^{+}$ remains invariant. The moment the \textit{leading robot} reaches $l$, the configuration becomes unbalanced. The rest of the proof follows from the previous case.
		\end{subcase}
		
	\end{case}
	\end{proof}
		\begin{lemma}
		If $C(t)\in$ $\mathcal I_4^a$, then the \textit{target Weber meeting node} remains invariant in the \textit{Creating Multiplicity on Target Weber meeting node} phase.\label{correc5}
	\end{lemma}
	\begin{proof}
		Since $C(t)\in$ $\mathcal I_4^{a}$, the \textit{meeting nodes} are symmetric with respect to rotational symmetry. Let $c$ be the center of the rotational symmetry for $M$. According to Lemma \ref{wbn}, the \textit{Weber meeting nodes} remains invariant while all the robots move towards it. The following cases are to be considered.
		\setcounter{case}{0}
	\begin{case}
	    \normalfont
	There exists a \textit{Weber meeting node} on $c$. It is selected as the \textit{target Weber meeting node}. Since $c$ is the center of rotational symmetry for the fixed \textit{meeting nodes}, the \textit{target Weber meeting node} remains invariant while the robots move towards it.
	\end{case}
	\begin{case}
	\normalfont
	$C(0)$ satisfy $C_1$. The \textit{target Weber meeting node} is selected in $\mathcal H^{++}$ as the \textit{Weber meeting node} which is farthest from the \textit{leading corner} contained in $\mathcal H^{++}$ in the \textit{string direction}. We have to prove that $\mathcal H^{++}$ remains invariant while the robots move towards the \textit{target Weber meeting node}. Note that since the \textit{guards} do not move during this phase, the $MER$ remains invariant. As a result, the \textit{Potential Weber meeting nodes} and $\mathcal H^{++}$ remain invariant. Hence, the \textit{target Weber meeting node} remains invariant.
	\end{case}
	\begin{case}
	\normalfont
	   $C(0)$ satisfy $C_2$. We have to prove that $\mathcal H^{++}$ remains invariant while the robots move towards the \textit{target Weber meeting node}. Considering such quadrants that contain the maximum number of \textit{Potential Weber meeting nodes}, the \textit{target Weber meeting node} is selected as the \textit{Potential Weber meeting node} in $\mathcal H^{++}$. Ties are broken by considering the \textit{Weber meeting node} in $\mathcal H^{++}$ which is farthest from the \textit{leading corner} in $\mathcal H^{++}$ in the \textit{string direction}. If $C(0)$ satisfies $C_{22}$, first, all the \textit{non-guard} robots in the quadrants adjacent to $\mathcal H^{++}$ and on $l \cup l'$ move towards the \textit{target Weber meeting node} in $\mathcal H^{++}$. Finally, the other \textit{non-guard} robots move towards $m$. Since $\mathcal H^{++}$ is the unique quadrant that contains the maximum number of robot positions, it still contains the maximum number of robots while all such robots reach $\mathcal H^{++}$. Hence, the \textit{target Weber meeting node} remains invariant. Otherwise, if $C(0)$ satisfies $C_{21}$, there exists more than one quadrant that contains the maximum number of robot positions. Considering such quadrants and the corners contained in those quadrants. $\mathcal H^{++}$ is the quadrant containing the largest lexicographic string among those $\alpha _i' s$ that are associated with the \textit{leading corners} contained in such quadrants. A \textit{leading robot} is selected in the \textit{Leading Robot Selection} phase and is allowed to move towards the \textit{target Weber meeting node} in $\mathcal H^{++}$. While the \textit{leading robots} moves towards the \textit{target Weber meeting node} in $\mathcal H^{++}$, $\mathcal H^{++}$ remains invariant according to Lemma \ref{correc2}. As a result, the configuration becomes unbalanced. The rest of the proof follows similarly, as in the unbalanced case.      
	\end{case}
	\begin{case}
\normalfont			
 $C(0)$ satisfy $C_3$. The \textit{target Weber meeting node} is selected on either $l$ or $l'$. Note that, in this case, if the configuration is unbalanced, $\mathcal H^{++}$ is the quadrant that contains the minimum number of robots. Otherwise, if the configuration is balanced, then $\mathcal H^{++}$ is the quadrant containing the smallest lexicographic string among all those $\alpha _i' s$. In both cases, all the \textit{non-guard} robots on $l \cup l'$, and the \textit{non-guard robots} on $\mathcal H^{++}$, move towards the \textit{target Weber meeting node} $m$. After such robots reach $\mathcal H^{++}$, $\mathcal H^{++}$ remains the unique quadrant with the minimum number of robots. As a result, $\mathcal H^{++}$ remains invariant. Hence, the \textit{target Weber meeting node} remains invariant. 
\end{case} 
\end{proof}
The next two lemmas prove that any initial configuration $C(0) \in \mathcal I \setminus \mathcal U$, would never reach a configuration $C(t) \in \mathcal U$, at any point of time $t>0$ during the execution of the algorithm \textit{Optimal Gathering()}.
	\begin{lemma}
		Given $C(0)\in \mathcal I_3$ and $t > 0$ be an arbitrary instant of time at which at least one robot has completed its LCM cycle. If $C(0)\notin \mathcal I_3^{b3}$, then during the execution of the algorithm \textit{Optimal Gathering()}, $C(t)\notin \mathcal I_3^{b3}$.\label{correc8} 
	\end{lemma}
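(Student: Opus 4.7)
The plan is to partition the analysis by the subclass of $\mathcal I_3$ containing $C(0)$. Because the algorithm operates on $\mathcal I \setminus (\mathcal U \cup \mathcal I_3^{b4})$ and $\mathcal I_3^{b3} \subseteq \mathcal U$ by Lemma \ref{m4}, the hypotheses force $C(0) \in \mathcal I_3^a \cup \mathcal I_3^{b1} \cup \mathcal I_3^{b2}$. For each of these three subclasses I would show that the intermediate configuration $C(t)$ cannot satisfy all three defining conditions of $\mathcal I_3^{b3}$ simultaneously, namely reflective symmetry about $l$, no robot on $l$, and no Weber node on $l$.

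For $C(0) \in \mathcal I_3^{b1}$, I would observe that the target Weber node $m$ lies on $l$ and every non-guard robot moves toward $m$ along shortest paths during the Creating Multiplicity on Target Weber Node phase, so Lemma \ref{wbn}(1) ensures $m \in W(t)$ throughout, while the stationary guards preserve the $MER$ and hence the identification of $l$. Therefore $l \cap W(t) \neq \emptyset$ at every $t$, ruling out $\mathcal I_3^{b3}$. For $C(0) \in \mathcal I_3^{b2}$, the Symmetry Breaking phase moves the northernmost robot on $l$ to an adjacent node off $l$: before that robot completes its move a robot remains on $l$ and $C(t)$ is still in $\mathcal I_3^{b2}$; once it completes, the reflection symmetry is broken and $C(t)$ lies in $\mathcal I_3^a$ (or in $\mathcal I_1$ if the Weber set collapses to a singleton by Lemma \ref{wbn}(2)). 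Either way $C(t) \notin \mathcal I_3^{b3}$, and the analysis reduces to the asymmetric case.

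For $C(0) \in \mathcal I_3^a$, I would argue that asymmetry about $l$ is preserved throughout, which is precisely what Lemmas \ref{correc1} and \ref{correc4} buy us. Lemma \ref{correc1} keeps the leading robot uniquely identifiable while it moves toward $\mathcal H^+$, which amounts to preserving a strict lexicographic inequality between the strings $\alpha_{AD}$ and $\alpha_{BC}$ associated with the two leading corners; that inequality is incompatible with reflective symmetry about $l$. Lemma \ref{correc4} then transports the same invariant through the Creating Multiplicity on Target Weber Node phase. The main obstacle is asynchrony: several non-guard robots may be mid-move simultaneously, and I must rule out that a transient snapshot accidentally restores reflective symmetry about $l$ with no robot and no Weber node on $l$. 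The cleanest route is to argue at the level of the witnessing position in $\alpha_{AD}$ versus $\alpha_{BC}$, using the string-based case analyses inside the proofs of Lemmas \ref{correc1} and \ref{correc4}, together with the fact that the $MER$ is frozen by the guards, to conclude that such a witness cannot be annihilated by any admissible intermediate move.
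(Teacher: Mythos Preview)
Your proposal is correct and follows essentially the same decomposition as the paper: dispose of the case with a Weber node on $l$ (your $\mathcal I_3^{b1}$ treatment) via Lemma~\ref{m1}, reduce $\mathcal I_3^{b2}$ to the asymmetric case after the symmetry-breaking move, and for $\mathcal I_3^a$ argue that asymmetry is preserved using Lemma~\ref{correc1} together with the $C_1/C_2$ and balanced/unbalanced split. One small remark: Lemma~\ref{correc4} as stated asserts invariance of the \emph{target Weber node}, not of asymmetry; the paper instead argues directly that robots only flow from $\mathcal H^{-}\cup l$ into $\mathcal H^{+}$, so the configuration stays unbalanced and hence asymmetric---you should phrase that step the same way rather than citing Lemma~\ref{correc4}.
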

	\begin{proof}
	According to Lemma \ref{m1}, the \textit{Weber meeting nodes} remains invariant while the robots move towards it. Since the \textit{meeting nodes} admits a single line of symmetry $l$ and there does not exist any \textit{Weber meeting node} on $l$, assume that $C(0) \in \mathcal I_3^a \cup \mathcal I_3 ^{b2}$. The following cases are to be considered.
	 \setcounter{case}{0}
	\begin{case}
	\normalfont
		$C(0) \in \mathcal I_3^a$. Note that there does not exist any \textit{Weber meeting node} on $l$, otherwise according to Lemma \ref{m1}, $C(t)\notin \mathcal I_3^{b3}$. Depending on the position of \textit{Potential Weber meeting nodes}, the following subcases may arise.
		 \setcounter{subcase}{0}
		\begin{subcase}
		\normalfont
			 $C(0)$ satisfy $C_1$. All the \textit{non-guard} robots in $\mathcal H^{-} \cup l$ move towards the \textit{target Weber meeting node} in $\mathcal H^{+}$. So, at any arbitrary instant of time $t>0$, $C(t)$ remains asymmetric and hence $C(t)\notin \mathcal I_3^{b3}$.
			 \end{subcase}
			 \begin{subcase}
			 \normalfont
			 $C(0)$ satisfy $C_2$. If the configuration is unbalanced, all the robots in $\mathcal H^{-} \cup l$ moves towards the \textit{target Weber meeting node} in $\mathcal H^{+}$. As a result, the configuration remains unbalanced and hence asymmetric. Otherwise, if the configuration is balanced, the \textit{leading robot} moves towards the \textit{target Weber meeting node} at some time $t'>0$. According to Lemma \ref{correc1}, the configuration remains asymmetric during its movement and ultimately, the configuration becomes unbalanced. Proceeding similarly, as in the unbalanced case, at any arbitrary instant of time $t>0$, $C(t)$ remains asymmetric and hence $C(t)\notin \mathcal I_3^{b3}$, where $t \geq t'$.
		\end{subcase}
		\end{case}
		\begin{case}
		\normalfont
		 $C(0) \in \mathcal I_3 ^{b2}$. Assume that at time $t'>0$, the northernmost robot on $l$ moves towards an adjacent node away from $l$, which transforms the configuration into an unbalanced asymmetric configuration. The rest of the proof follows from the previous case. Hence, $C(t)\notin \mathcal I_3^{b3}$, where $t'\geq t$.
	\end{case}
	\end{proof}
	\begin{lemma}
		Given $C(0)\in \mathcal I_4$ and $t > 0$ be an arbitrary instant of time at which at least one robot has completed its LCM cycle. If $C(0)\notin \mathcal I_4^{b3}$, then during the execution of the algorithm Optimal Gathering(), $C(t)\notin \mathcal I_4^{b3} $. \label{correc9}
	\end{lemma}
	\begin{proof}
		According to Lemma \ref{m1}, the \textit{Weber meeting nodes} remains invariant while the robots move towards it. Since the \textit{meeting nodes} admit rotational symmetry and there does not exist any \textit{Weber meeting node} on $c$, assume that $C(0) \in \mathcal I_4 ^a \cup \mathcal I_4 ^{b2}$. The following cases are to be considered.
		 \setcounter{case}{0}
	\begin{case}
	\normalfont
	    $C(0) \in \mathcal I_4 ^a$. If there exists a \textit{Weber meeting node} on $c$, then all the robots move towards it and finalize the \textit{gathering}. According to Lemma \ref{wbn}, since the \textit{Weber meeting node} remains invariant while all the robots move towards it, $C(t)\notin \mathcal I_4^{b3}$. Consider the case when there does not exist any \textit{Weber meeting node} on $c$. The following subcases may arise.
	     \setcounter{subcase}{0}
	    \begin{subcase}
	    \normalfont
	    	$C(0)$ satisfy $C_1$. All the \textit{non-guard} robots from the other quadrants as well on $l$ or $l'$ move towards the \textit{target Weber meeting node} in $\mathcal H^{++}$. So, at any arbitrary instant of time $t>0$, $C(t)$ remains asymmetric and hence $C(t)\notin \mathcal I_4^{b3}$.
	    	\end{subcase}
	    	\begin{subcase}
	    	\normalfont
	    	    $C(0)$ satisfy $C_2$. If the configuration is unbalanced, all the robots in the quadrants different from $\mathcal H^{++}$ as well as the robots on $l$ or $l'$ move towards the \textit{target Weber meeting node} in $\mathcal H^{++}$ in the \textit{Creating Multiplicity on Target Weber meeting node} phase. While such a robot reaches $\mathcal H^{++}$, the configuration remains unbalanced and hence asymmetric. If the configuration is balanced, a \textit{leading robot} is selected in the \textit{Leading Robot Selection} phase. According to Lemma \ref{correc2}, the configuration remains asymmetric during the movement of the \textit{leading robot} towards the \textit{target Weber meeting node} at some time $t'>0$. While the \textit{leading robot} reaches $\mathcal H^{++}$, the configuration becomes unbalanced and remains asymmetric. So, at any arbitrary instant of time $t>0$, $C(t)\notin \mathcal I_4^{b3}$, where $t\geq t'$.
	    	\end{subcase}

\begin{subcase}
\normalfont
$C(0)$ satisfy $C_3$. All the robots in $\mathcal H^{++}$, move towards the \textit{target Weber meeting node}. After all the robots in $\mathcal H^{++}$, reach the \textit{target Weber meeting node} $m$, all the \textit{non-guard} robots from the other quadrants as well as on $l$ or $l'$ move towards $m$, thus creating a multiplicity on $m$. During this robot movement, $C(t)$ remains asymmetric and hence $C(t)\notin \mathcal I_4^{b3}$. 
\end{subcase}		 
	\end{case}
	\begin{case}
	\normalfont
	$C(0) \in \mathcal I_4 ^{b2}$. Assume that at time $t'>0$, the robot on $c$ move towards one of the adjacent nodes which transforms the configuration into a configuration which may be asymmetric or admits a single line of symmetry. Proceeding similarly, as in the case of $C(0) \in \mathcal I_3^{a} \cup \mathcal I_4 ^{a}$, at any arbitrary instant of time $t>0$, $C(t)$ remains asymmetric and hence $C(t)\notin \mathcal I_4^{b3}$, where $t \geq t'$.
	\end{case}
	\end{proof}

\begin{theorem}
If the initial configuration belongs to the set $\mathcal I \setminus U$, then algorithm \textit{Optimal Gathering()} ensures \textit{gathering over Weber meeting nodes}.
\end{theorem}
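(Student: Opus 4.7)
The plan is to prove the theorem by a case analysis over the disjoint classes comprising $\mathcal I \setminus \mathcal U$, namely $\mathcal I_1$, $\mathcal I_2$, $\mathcal I_3^a$, $\mathcal I_3^{b1}$, $\mathcal I_3^{b2}$, $\mathcal I_4^a$, $\mathcal I_4^{b1}$, and $\mathcal I_4^{b2}$, and in each case showing that the algorithm drives the system through a well-defined sequence of phases terminating in all robots co-located on a \textit{Weber node}. For each class, I would establish three invariants: (i) every robot, upon completing a \textit{Look}, agrees on which phase is currently being executed and what the \textit{target Weber node} is; (ii) the \textit{target Weber node} selected at time $0$ remains a \textit{Weber node} for all $t>0$; and (iii) the configuration never transitions into a class of $\mathcal U$ (nor into $\mathcal I_3^{b4}$) during execution. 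Invariant (i) follows from the fact that the partition is determined only by the fixed \textit{meeting nodes} together with the observable $MER$, \textit{leading corner(s)}, and \textit{key corner(s)}, all of which are robot-computable from the snapshot; invariant (ii) is precisely Lemmas \ref{correc3}--\ref{correc5} combined with Lemma \ref{m1}; and invariant (iii) is Lemmas \ref{correc8}--\ref{correc9}.

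Given these invariants, I would argue progress phase by phase. For $C(0)\in\mathcal I_1$ the claim is immediate from Lemma \ref{m1} since all robots move toward the unique \textit{Weber node}. For $C(0)\in\mathcal I_2\cup\mathcal I_3^{b1}\cup\mathcal I_4^{b1}$, the \textit{Guard Selection} phase freezes the four relevant sides of $MER$, so the \textit{target Weber node} is stable by Lemma \ref{correc3}; each \textit{non-guard} then travels along a shortest path to $m$, and fairness of the scheduler guarantees termination of the \textit{Creating Multiplicity} phase in finite time. The symmetric classes $\mathcal I_3^{b2}$ and $\mathcal I_4^{b2}$ require an additional \textit{Symmetry Breaking} step: the designated robot on $l$ (resp.\ on $c$) repeatedly takes one-step moves until the configuration falls into a previously handled asymmetric subclass, and the argument in the algorithm description shows this terminates. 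For $\mathcal I_3^a$ and $\mathcal I_4^a$, if the configuration is unbalanced, the \textit{target Weber node} is already pinned by the unique half-plane or quadrant; if balanced, the \textit{Leading Robot Selection} phase produces a uniquely identifiable robot (Lemmas \ref{correc1}, \ref{correc2}) whose motion strictly reduces the relevant lexicographic comparison, so after finitely many of its moves the configuration becomes unbalanced, reducing to the previous case.

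Once a multiplicity has been created on $m$ and the algorithm recognizes the \textit{Finalization of Gathering} condition (at most four, or exactly six symmetric, off-$m$ robot positions on the boundary of $MER$ as specified), I would verify the final step: each \textit{guard} moves first along the boundary of $MER$ and then collinearly toward $m$ while minimizing Manhattan distance. I need to check that such a path never passes through another \textit{Weber node} so that no spurious multiplicity arises on $W(t)\setminus\{m\}$; this follows because $m$ already holds a strict majority and $MER$ has not shrunk, so by Lemma \ref{m1} every candidate \textit{Weber node} in $W(t)$ is known and can be routed around along the boundary path. Combined with global strong multiplicity detection, this ensures $m$ remains identifiable throughout, and the \textit{guards} complete their motion in finitely many LCM cycles by fairness.

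The principal obstacle will be the asynchronous interaction between phase transitions and \textit{pending moves}: a robot may compute a destination in one phase and execute it after the rest of the system has logically advanced. I would handle this by showing, for each phase, that any \textit{pending move} computed under the rules of that phase is still consistent with the invariants of the next phase — concretely, that a \textit{non-guard} executing a pending shortest-path step toward $m$ during the \textit{Finalization} phase neither violates the uniqueness of $m$ as a \textit{Weber node} (Lemma \ref{m1}) nor alters the \textit{leading corner(s)} (since \textit{guards} remain frozen until the finalization predicate is observed). This monotone compatibility between successive phases is what ultimately converts the per-phase invariants into a global termination argument and completes the proof.
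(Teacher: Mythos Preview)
Your proposal is correct and follows essentially the same route as the paper's own proof: the paper also reduces the theorem to the supporting lemmas you cite (Lemmas \ref{correc1}--\ref{correc2} for invariance of the leading robot, Lemmas \ref{correc3}--\ref{correc5} for invariance of the target, Lemmas \ref{correc8}--\ref{correc9} for non-entry into $\mathcal U$, and Lemma \ref{m1} throughout), then argues progress via the multiplicity-creation and finalization phases. Your write-up is in fact more explicit than the paper's, particularly in articulating the three invariants and in treating pending moves across phase boundaries, which the paper's proof handles only implicitly; one minor imprecision is your justification that guards avoid creating spurious multiplicities ``because $m$ already holds a strict majority and $MER$ has not shrunk''---the $MER$ can shrink once guards leave the boundary, and the paper's actual guarantee rests on the specific boundary-then-collinear routing together with the fact that $m$ is thereafter identified purely by its multiplicity count (via global strong multiplicity detection), not on $MER$ invariance.
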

\begin{proof}
Assume that $C(0) \in \mathcal I \setminus \mathcal U$. If $C(t)$ is not a final configuration for some $t\geq 0$, each active robot executes algorithm \textit{Optimal Gathering()}. According to the Lemmas \ref{correc8} and \ref{correc9}, any initial configuration $C(0) \in \mathcal I \setminus \mathcal U$, would never reach a configuration $C(t) \in \mathcal U$, at any point of time $t>0$ during the execution of the algorithm \textit{Optimal Gathering()}. The following cases are to be considered.
 \setcounter{case}{0}
\begin{case}
\normalfont
There exists a unique \textit{Weber meeting node}. All the robots move towards the unique \textit{Weber meeting node} and finalize the \textit{gathering}.
\end{case}
\begin{case}
\normalfont
There exists more than one \textit{Weber meeting node}. The \textit{target Weber meeting node} is selected in \textit{Target Weber meeting node Selection} phase. According to the Lemmas \ref{correc3}, \ref{correc4} and \ref{correc5}, the \textit{target Weber meeting node} remains invariant during the execution of the algorithm \textit{Optimal Gathering()}. If $C(0)$ is a balanced configuration, then a \textit{leading robot} is selected in \textit{Leading Robot Selection} phase. Lemmas \ref{correc1} and \ref{correc2} ensure that the \textit{leading robot} remains invariant during its movement. 

\noindent Without loss of generality, assume that $m$ is the \textit{target Weber meeting node}. Assume that, at any point of time $t$, there exists at which at least one robot $r$ that has completed its LCM cycle. If $r$ is a \textit{non-guard} robot, then it must have moved at least one unit distance towards $m$ at time $t'>t$. Since, each \textit{non-guard} robot moves towards $m$ via a shortest path in the \textit{Creating Multiplicity on Target Weber meeting node} phase, this implies that eventually at time $t''>t'$, there exists a robot multiplicity on $m$. Finally, in the \textit{Finalization of Gathering} phase, since the robots have \textit{global strong-multiplicity detection} capability, all the \textit{guard} robots move towards $m$ and finalize the \textit{gathering} without creating any other multiplicity on a \textit{meeting node}. Since each robot finalizes the \textit{gathering}, by moving towards $m$ via a shortest path, \textit{gathering over Weber meeting nodes} is ensured. 
\end{case}

\end{proof}	
	\begin{figure}[h]
	
			\centering
		
			{
				\includegraphics[width=0.750\columnwidth]{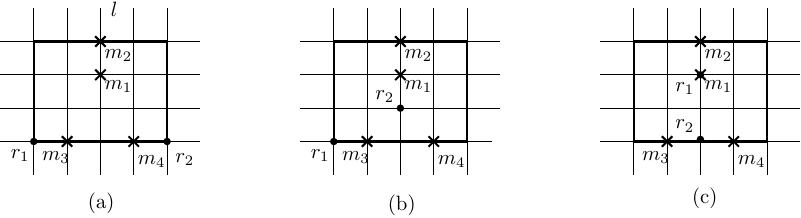}
			}
			\caption{(a) $C(0)$, (b) $C(t_1)$, (c) $C(t_2)$} 
	        \label{rep}
	        \end{figure}

\section{Optimal Gathering for $C(t)\in \mathcal{U}$} \label{s6}
We have proposed a deterministic distributed algorithm that ensures {\it gathering} over a \textit{Weber meeting node} for any initial configuration $C(0)\in \mathcal{I}\setminus \mathcal{U}$. Let $\mathcal{U}'\subset \mathcal{U}$ denote the set of all the initial configurations which admit a unique line of symmetry $l$ and no \textit{Weber meeting nodes} or robot positions exist on $l$. However, there exists at least one \textit{meeting node} on $l$. The set $\mathcal{U}'$ includes the initial configurations for which {\it gathering} is feasible on a \textit{meeting node}. Note that, if $C(t)\in \mathcal{U}\setminus\mathcal{U}'$, then it is ungatherable. To ensure \textit{gathering} deterministically, the target point must lie on $l$. At this point of time, one optimal feasible solution for a configuration $C(0)\in \mathcal{U}'$ would be to finalize the \textit{gathering} at a \textit{meeting node} $m\in l$ at which the total number of moves is minimized. Ties may be broken by considering the northernmost such \textit{meeting node}. Another very important assumption that is not highlighted much in the literature is that initially, all the robots are static. The correctness of our proposed algorithm fails to hold when the optimal target point is dynamically selected. As a consequence, termination may not be guaranteed with optimal number of moves. For example, we consider one possible execution for an initial configuration $C(0)=(\lbrace r_1,r_2\rbrace,\lbrace m_1,m_2,m_3,m_4\rbrace)$ in figure \ref{rep}(a). At $t=0$, $m_3$ and $m_4$ are the \textit{Weber meeting nodes}. Between $m_1$ and $m_2$, the number of total moves will be minimized if the robots gather at $m_1$. While $r_1$ and $r_2$ start moving towards $m_1$, there may be a pending move due to the asynchronous behavior of the scheduler. Consider the case when $r_2$ has completed its LCM cycle while $r_1$'s move is pending. At $t=t_1>0$, $m_3$ becomes the unique \textit{Weber meeting node} (figure~\ref{rep}(b)). At $t_2>t_1$, assume that $r_1$ has reached $m_1$ and $r_2$ has moved by one hop distance towards $m_3$. At $t_2$, $m_1$ becomes the unique Weber meeting node (figure~\ref{rep}(c)). Next, the \textit{gathering} will be finalized eventually at $m_1$. Initially, the minimum number of moves required to finalize the \textit{gathering} is 8 (figure~\ref{rep}(a)). The number of moves required to finalize the \textit{gathering} in this execution is 10. It is not guaranteed that the minimum number of moves required to finalize the \textit{gathering} in the initial configuration is achievable. 
\section{Conclusion} \label{s7}
In this paper, the \textit{optimal gathering over Weber meeting nodes} problem has been investigated over an infinite grid. The objective function is to minimize the total distance traveled by all the robots. We have characterized all the configurations for which \textit{gathering} over a \textit{Weber meeting node} cannot be ensured. For the remaining configurations, a deterministic distributed algorithm has been proposed that solves the \textit{gathering} over \textit{Weber meeting nodes} for at least seven robots.
	
\noindent One future direction of work would be to consider the \textit{min-max gathering over meeting nodes} problem, where the objective function is to minimize the maximum distance traveled by a robot. Since there remain some initial symmetric configurations, for which \textit{gathering over} \textit{Weber meeting nodes} cannot be ensured, it would be interesting to consider randomized algorithms for those configurations. Another direction for future interest would be to consider multiplicities in the initial configuration. 
 \bibliographystyle{ws-ijfcs}
\bibliography{elsarticle-template}
\end{document}